\numberwithin{equation}{section}
\newtheorem{theorem}{Theorem}[section]
\newtheorem*{theorem*}{Theorem}
\newtheorem{lemma}[theorem]{Lemma}
\newtheorem{corollary}[theorem]{Corollary}
\theoremstyle{definition}
\newtheorem{remark}[theorem]{Remark}
\newtheorem{definition}[theorem]{Definition}
\newcommand{\V}{\mathcal{V}}
\newcommand{\RR}{\mathbb{R}}
\newcommand{\ZZ}{\mathbb{Z}}
\newcommand{\E}{\mathbb{E}}
\newcommand{\Aff}{\mathrm{Aff}}
\newcommand{\U}{\mathrm{U}}
\renewcommand{\O}{\mathrm{O}}
\newcommand{\GLV}{\mathrm{GL}(V)}
\newcommand{\AffV}{\mathrm{Aff}(V)}
\newcommand{\OV}{\mathrm{O}(V)}
\newcommand{\B}{\mathcal{B}}
\newcommand{\M}{\mathcal{M}}
\newcommand{\N}{\mathcal{N}}
\renewcommand{\L}{\mathcal{L}}
\newcommand{\R}{\mathbb{R}}
\newcommand{\sort}{\mathbf{sort}}
\newcommand{\rowsort}{\mathbf{rowsort}}
\DeclareMathOperator{\GL}{GL}
\DeclareMathOperator{\Span}{span}
\DeclareMathOperator{\SO}{SO}
\DeclareMathOperator{\real}{Re}
\newcommand{\rev}[1]{{\color{black} #1}}
\begin{document}

\title{A transversality theorem for semi-algebraic sets with application to signal recovery from the second moment and cryo-EM}

\author{Tamir Bendory, Nadav Dym, Dan Edidin, and Arun Suresh}

\date{\today}

\maketitle

\begin{abstract}
Semi-algebraic priors are ubiquitous in signal processing and machine learning. 
Prevalent examples include a) linear models where the signal lies in a low-dimensional subspace; b) sparse models where the signal can be represented by only a few coefficients under a suitable basis; and c) a large family of neural network generative models. 
In this paper, we  
prove a transversality theorem for semi-algebraic sets in orthogonal or unitary representations
of groups: with a suitable dimension bound, a generic translate of any semi-algebraic set is transverse to the orbits of the group action.  
This, in turn,  implies that if a signal lies in a low-dimensional semi-algebraic set, then it can be recovered uniquely from measurements that separate orbits. 

As an application, we consider the implications of the transversality theorem to the problem of recovering signals that are translated by random group actions from their second moment.
As a special case, we discuss cryo-EM. This is a leading technology to constitute the spatial structure of biological molecules, and serves as our prime motivation.
In particular, we derive explicit bounds for recovering a molecular structure from the second moment under a semi-algebraic prior and deduce information-theoretic implications. We also obtain information-theoretic bounds for three additional applications: factoring Gram matrices, multi-reference alignment, and phase retrieval. Finally, we deduce bounds for designing permutation invariant separators in machine learning.

\end{abstract}

\section{Introduction}
\label{sec:introduction}
\paragraph{Motivation: cryo-EM.} Single-particle cryo-electron microscopy (cryo-EM) is an increasingly important technology for reconstructing the 3-D structure of biological molecules, such as proteins, at high resolution. 
In 2023, the number of molecular structures resolved using cryo-EM was ten times more than the number in 2013~\cite{wwpdb2024emdb}.
There are several reasons for the growing dominance of cryo-EM. In contrast to the classical X-ray crystallography technology, a cryo-EM experiment does not require crystallization, so it can capture the molecules in their native states and recover the structure of numerous biological molecules that resist crystallization.  
In addition, it has the potential to elucidate the dynamics of biological molecules~\cite{toader2023methods}. 
Cryo-EM was selected by Nature Methods as the ``Method of the Year 2015'', three of its pioneers were awarded the Nobel Prize in Chemistry 2017, and it was chosen as a ``Method to Watch" in 2022 by Nature Methods~\cite{doerr2022dynamic}.

A typical cryo-EM experiment produces hundreds of thousands of noisy tomographic projections of the sought molecular structure, each taken from an unknown viewing direction; a detailed mathematical model is introduced in Section~\ref{sec:cryoEM}.
\rev{The cryo-EM inverse problem involves estimating the 3-D molecular structure from noisy projections}, while the unknown viewing directions are commonly treated as nuisance variables~\cite{bendory2020single}. 
Importantly, 
all existing cryo-EM reconstruction algorithms build upon prior assumptions on the 3-D structure. For example, there is an important class
of algorithms, which are based on Bayesian techniques~\cite{scheres2012relion,punjani2017cryosparc}.
This paper focuses on the substantial and rich family of semi-algebraic priors. 

\paragraph{Semi-algebraic sets.} A semi-algebraic set $\M \subseteq \RR^\rev{S}$ is a finite union of sets, which are defined by polynomial equality and inequality constraints. \rev{If $\M \subset \R^S, \mathcal{N} \subset \R^T$ are semi-algebraic sets,}
we will say that a map $f \colon \M  \to \mathcal{N}$  is
{\em semi-algebraic} if the graph $\Gamma_f = \{(x,f(x)\} \subset
\M \times \mathcal{N}$ is a semi-algebraic subset of $\R^S \times \R^T$. The image of a semi-algebraic set under a semi-algebraic map is always semi-algebraic. Any semi-algebraic set~$\M$ can be written as a finite union of smooth manifolds, and the dimension of~$\M$ is defined to be the maximal dimension of these manifolds.
The assumption that a signal (e.g., the 3-D molecular structure) lies in a semi-algebraic set is referred to as a \emph{semi-algebraic prior}. This work is motivated by three important special cases of semi-algebraic sets.

\begin{itemize}
    \item \emph{Linear priors:} the assumption that the signal of interest lies in some low-dimensional subspace. Linear priors are ubiquitous in signal processing and machine learning, and they are the \rev{foundation} of popular methods, such as Principal Component Analysis (PCA). Linear models were proven highly effective in various stages of the computational pipeline of cryo-EM data processing~\cite{van1981use,zhao2016fast,weiss2023unsupervised}.  Some (but not all) non-linear manifolds are also semi-algebraic sets; a simple example would be a sphere. 

\item \emph{Sparse priors}: the assumption that many signals can be approximated by only a few coefficients under some basis or frame~\cite{elad2010sparse}.
This assumption was used recently to design a new class of cryo-EM algorithms in case not many samples are available~\cite{bendory2023autocorrelation}.
\item \emph{Deep generative models:} these are based on neural networks of the form
\begin{equation} \label{eq:relu}
x=A_\ell\circ \eta_{\ell-1} \circ A_{\ell-1}\circ \ldots \circ \eta_1 \circ A_1(z),
\end{equation}
where $z$ resides in a low-dimensional (latent) space, the $A_i$'s are affine transformations and the $\eta_i$'s are semi-algebraic activation functions, such as the rectification function (ReLU) that sets the negative entries of an input to zero. 
Generative models have demonstrated promising results in various tasks within the cryo-EM computational pipeline, \rev{particularly} in conformational variability analysis - one of the substantial challenges in the field~\cite{zhong2021cryodrgn,chen2021deep,punjani20233dflex}.

\end{itemize}
\paragraph{A transversality theorem for semi-algebraic sets.}
In cryo-EM, the method of moments seeks to recover an unknown structure from the higher-order moments of the measurements~\cite{bendory2020single,kam1980reconstruction}. For computational and statistical reasons, there is a desire to use moments of minimal order. As proved in \cite{bendory2022sample}, when no priors are imposed, the second moment determines a structure up to the action of an ambiguity group $H$, which is a product of certain orthogonal groups.
In previous works~\cite{bendory2023autocorrelation,bendory2022sample}, the authors study sparsity priors that ensure that a structure can be recovered from its second moment. This means that the prior set $\M$
has the property that for any $x \in \M$ the orbit $Hx$ \rev{(that is, the orbit under the group of the product of orthogonal matrices)} intersects $\M$ only in $x$. In other words, the prior set $\M$ is {\em transverse} to the orbits of the ambiguity group $H$.

In this paper, we derive strong recovery guarantees for cryo-EM structures from the second moment by proving a much more fundamental result.
We prove a transversality theorem for semi-algebraic sets
in orthogonal or unitary representations $V$ of arbitrary compact Lie groups $H$. In particular,
we derive bounds on the
dimension of a generic \rev{translate of a} semi-algebraic set~$\M$, which ensures that it is transverse to the orbits of the group action; i.e., the $H$ orbit of any point $x\in \M$ intersects $\M$ only at $x$.
This implies that if $F$ 
is a measurement function (such as the second moment) that separates $H$-orbits,
then $F$ is one-to-one when restricted to $\M$. This means that the prior knowledge
that $x$ lies in $\M$ ensures that $x$ is determined by the measurement
$F(x)$. 
The main transversality theorems for real orthogonal
representations, Theorem~\ref{thm:affine}
and Theorem~\ref{thm:orthogonal}, are stated in Section~\ref{sec:main_result}, and proved in Section~\ref{sec:proof}.
Since our primary interest is in real representations, we state and prove
the corresponding results for complex unitary representations 
in Appendix~\ref{sec:complex} to avoid cluttering the main body of the text.

\rev{\begin{remark}We note that our use of the term {\em transversality} is
not the usual differential-geometric notion of transversal intersection
of manifolds. Instead, we were inspired by S. Kleiman's celebrated theorem in his paper {\em The transversality of a general translate}~\cite{kleiman1974transversality}. There, he proves  
that if $V, W$ are two subvarieties of a homogeneous space for
some algebraic group $H$, then the general translate of $V$ by an element of $H$ intersects $W$ in the expected dimension. While the setup of our transversality results is quite different from Kleiman's, our main results are in the spirit of Kleiman's classic theorem.
\end{remark}} 

\paragraph{Applications.} 
The abstract setting of the transversality theorem allows us to derive several corollaries 
that span a wide range of mathematical models and scientific applications. 
Our main focus is on recovering signals that are translated by random group actions from their second moment. 
In Section~\ref{sec:signal_recovery}, we introduce the implications of the 
transversality theorem to this problem and discuss two immediate consequences. First, we derive bounds on the problem of recovering a matrix from its Gram matrix, assuming the matrix lies in a low-dimensional semi-algebraic set. 
Second, we derive bounds for signal recovery from Fourier magnitudes, under semi-algebraic priors.  This problem is called the \emph{phase retrieval problem}, and it plays a key role in a variety of applications in optical imaging, signal processing, and X-ray crystallography. 
In Section~\ref{sec:cryoEM} we discuss the implications of the transversality theorems to cryo-EM---our chief motivation. We derive conditions under which a 3-D molecular structure can be recovered from the second moment of the measurements. We show that this implies an improved sample complexity (fewer observations are required for reconstruction) and discuss implications. We also discuss the connection to the multi-reference alignment problem. From a representation theory perspective, cryo-EM and multi-reference alignment can be understood as generalizations of the phase retrieval problem:  
While in the latter each irreducible representation appears with multiplicity one, the cryo-EM and multi-reference alignment models typically involve multiple copies of each irreducible representation. 
In Section~\ref{sec:permutation}
we apply the transversality theorem to a problem of designing 
permutation invariant separators in machine learning. This problem does not involve second moments, but rather other group invariants, namely row-wise sorting. 

\paragraph{Prior work.}
There are many previous works in which transversality results are implicitly
proved for specific group actions and for specific types of semi-algebraic priors, such as linear subspaces. For example, the frame phase retrieval results
of~\cite{balan2006signal} can be rephrased in the language
of this paper as stating that a generic linear subspace of dimension $M$ 
in $\R^N$ with $N > 2M -2$ is transverse to the orbits of
the group $\{\pm 1\}^N$ acting on $\R^N$ by coordinate sign changes. Because
our results apply to any semi-algebraic set, the bounds we obtain
from Theorem~\ref{thm:affine} and Theorem~\ref{thm:orthogonal} are sometimes slightly weaker (differing by a constant) from the transversality
bounds for specific semi-algebraic sets with simple structures, such as linear subspaces. 
The relation of our results with the existing phase retrieval literature is discussed further in Section~\ref{sec:signal_recovery}. 
For the cryo-EM case, discussed in Section~\ref{sec:cryoEM}, our bounds are actually much stronger than those previously obtained in the literature for generic orthonormal bases~\cite{bendory2022sample}. For the problem of permutations, we also obtain much stronger bounds than previous work \cite{balan2006signal,dym2022low},  (though with a different notion of genericity).

\section{A transversality theorem for semi-algebraic sets}\label{sec:main_result}

We consider a representation $V$ of a compact Lie group $H$
and prove results for semi-algebraic subsets of $V$.
Since the representations in the applications we consider are real, we focus on real representations and present the generalizations for complex representations in Appendix~\ref{sec:complex}. 

\paragraph{Semi-algebraic groups.}
A group $\Theta$ which has the structure of a semi-algebraic set in
$\R^N$ is called a {\em semi-algebraic group} if the multiplication
and inverse maps for the group are semi-algebraic maps. A semi-algebraic group acts on a semi-algebraic set $V$ by semi-algebraic
automorphisms if for each $\theta \in \Theta$ the map 
$V \stackrel{\theta \cdot } \to V$, $x \mapsto \theta \cdot x$ is a semi-algebraic automorphism of $V$.

\paragraph{Generic semi-algebraic sets.}
Our notion of a generic semi-algebraic set depends
on a choice of a 
semi-algebraic group 
$\Theta$ of semi-algebraic automorphisms of $V$. 
Hereafter, $\theta \cdot \M$ denotes
the image of $\M$ under the automorphism $\theta$.

\begin{definition}
Given a semi-algebraic group $\Theta$ of automorphisms of
$V$, we say that 
a transversality result holds for \emph{$\Theta$-generic 
semi-algebraic subsets} of a given dimension~$M$ if the following
condition is satisfied:
\begin{quote}
    For any 
semi-algebraic set $\M$ with $\dim \M \leq M$,  the transversality result
holds for $\theta \cdot \M$, where $\theta$ is a generic 
element of the semi-algebraic group $\Theta$. 
\end{quote}
Precisely, we mean that the set of $\theta \in \Theta$ for which the
transversality result {\em does not hold} for $\theta \cdot \M$ has
strictly smaller dimension than $\dim \Theta$. 

\end{definition}
While the machinery we develop works for any semi-algebraic group of automorphisms, our main results, which are proved in Section~\ref{sec:proof}, are presented below for three groups of automorphisms of 
$V$: the group $\GLV$ of invertible linear transformations,  the group $\AffV$ of invertible affine transformations, and the 
group $\O(V)$ of orthogonal transformations of $V$. 
Each choice of group may be more relevant for different applications and priors. 
For example, if we consider a sparsity prior---that is, we assume that
signals have a sparse expansion with respect to a generic orthonormal basis---then the natural notion of generic is $\OV$-generic since a
generic orthonormal basis is obtained by an orthogonal transformation from any fixed coordinate system.
This type of prior was studied for cryo-EM in~\cite{bendory2022sample}. Similarly, in~\cite{edidin2023generic} the prior that the signals
lie in a generic linear subspace was studied for X-ray crystallography. This prior corresponds to
$\GLV$-generic linear subspaces.
On the other hand, if we consider deep generative priors of the form~\eqref{eq:relu}, then a generic affine transformation corresponds to a generic choice of parameters for the last layer of the network, where all previous layers can have fixed parameters.

\paragraph{Statements of the main results.}
Our main results are formulated in terms of two parameters: the dimension of the semi-algebraic set $M$ and the effective dimension of the representation $K$,
defined by the dimension of the representation minus the maximum dimension of the orbits,
\begin{equation} \label{eq:K}
    K=\dim V -k(H),
\end{equation}
where 
$ k(H) = \max_{x \in V} \dim H x$. 
Note that the orbits of $H$ have dimension\footnote{A classical theorem of Chevalley states that any compact Lie group is the set of real points of a complex algebraic group and is, therefore, a real algebraic set. More generally, if $K \subset H$ is a closed (and hence compact) Lie subgroup, then the compact homogeneous space $H/K$ is also a real algebraic set~\cite[Theorem 3]{iwahori1966duality}. It follows that if $V$ is a representation of a compact Lie group $H$, then the orbit $H  x$ of any $x \in V$ is also a real algebraic set, since the orbit is isomorphic to the compact homogeneous space $H/H_x$ where $H_x = \{h \in H| h \cdot x = x\}$.
In particular, when we talk about the {\em dimension} of the orbits,  we are considering them as real algebraic subsets of $V$.}
at most
equal to $\dim V$ so $k(H) \leq \dim V$ and hence $K \geq 0$.
Clearly, the problem gets easier for smaller $M$ (the semi-algebraic set is of low dimension) and larger $K$ (the effective dimension of the representation is larger); we prove that the gap between $K$ and $M$ can be small. 
The theorems are proved in Section~\ref{sec:proof}.

\begin{theorem}[Main theorem, linear and affine transformations]\label{thm:affine}
Let $V$ be an orthogonal representation of a compact Lie group $H$. Let $\M \subseteq V$ be a semi-algebraic set of dimension $M$,
and 
let $K$ be as in~\eqref{eq:K}.
Then, 
\begin{enumerate}
	\item If $K>M$, then for a generic $x$ in a $\GLV$ or $\AffV$-generic semi-algebraic 
 subset $\M$ of dimension $M$,  if $h \cdot x \in \rev{\M}$ for some $h \in H$, then 
 $h \cdot x = \pm x$ (for $\GLV$) or $h \cdot x = x$ (for \rev{$\AffV$}).
  \item If $K>2M$, then for all $x$ in a $\GLV$ or \rev{$\AffV$}-generic semi-algebraic 
 subset $\M$ of dimension $M$, if $h \cdot x \in \M$ for some $h \in H$, then 
 $h \cdot x = \pm x$ (for $\GLV$) or $h \cdot x = x$ (for $\AffV$).
 \end{enumerate}
\end{theorem}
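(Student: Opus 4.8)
The plan is to deduce both parts from a single dimension count on an incidence set, in the spirit of Kleiman's theorem on the transversality of a general translate~\cite{kleiman1974transversality}, applied to the \emph{diagonal} action of $\Theta$ on $V\times V$, where $\Theta$ is $\GLV$ or $\AffV$. Since the $H$-action on $V$ is linear, the map $H\times V\to V\times V$, $(h,v)\mapsto(v,hv)$, is polynomial; let $\Sigma\subseteq V\times V$ be its image, which is semi-algebraic, and let $\Sigma^{*}$ be obtained from $\Sigma$ by deleting the diagonal $\{(v,v)\}$ in the $\AffV$ case, and both the diagonal and the anti-diagonal $\{(v,-v)\}$ in the $\GLV$ case. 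Each fibre of the map above over a point of $\Sigma$ is a coset of a point stabilizer $H_v$, so $\dim\Sigma^{*}\le\dim\Sigma\le\dim V+k(H)=2\dim V-K$. The content of the theorem is exactly that, for a suitable generic translate $\M$, the set $(\M\times\M)\cap\Sigma^{*}$ is empty (part 2) or projects to a proper subset of $\M$ (part 1).

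The crucial structural point is that $\Sigma^{*}$ lies inside the \emph{open $\Theta$-orbit} $\mathcal{O}\subset V\times V$. For $\AffV$ this orbit is $\{(v,w):v\ne w\}$; for $\GLV$ (when $\dim V\ge2$; the case $\dim V=1$ is trivial) it is $\{(v,w):v,w\text{ linearly independent}\}$. In both cases $\Theta$ acts transitively on $\mathcal{O}$ and $\dim\mathcal{O}=2\dim V$. That $\Sigma^{*}\subseteq\mathcal{O}$ is immediate for $\AffV$; for $\GLV$ this is precisely where orthogonality of $V$ enters, since $hv=\lambda v$ forces $|\lambda|=1$, hence $hv=\pm v$, and those pairs were removed. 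Writing $A=(\M\times\M)\cap\mathcal{O}$ and $B=\Sigma^{*}$, which are semi-algebraic subsets of the homogeneous space $\mathcal{O}$, and noting $(\theta\M\times\theta\M)\cap\Sigma^{*}=\theta A\cap B$ since $\mathcal{O}$ is $\Theta$-invariant, it suffices to bound $\dim(\theta A\cap B)$ for generic $\theta\in\Theta$.

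Here I would prove and apply a semi-algebraic analogue of Kleiman transversality: if a semi-algebraic group $G$ acts transitively and semi-algebraically on $X$, and $A,B\subseteq X$ are semi-algebraic, then for all $\theta$ outside a semi-algebraic subset of $G$ of dimension $<\dim G$ one has $\dim(\theta A\cap B)\le\dim A+\dim B-\dim X$. The proof is a double fibration: the incidence set $\mathcal{J}=\{(a,b,\theta)\in A\times B\times G:\theta a=b\}$ surjects onto $A\times B$ with every fibre a coset of a point stabilizer (of dimension $\dim G-\dim X$, by the orbit--stabilizer relation), so $\dim\mathcal{J}=\dim A+\dim B+\dim G-\dim X$; projecting $\mathcal{J}$ onto $G$ and using the standard dimension theory of semi-algebraic maps, the fibre over a generic $\theta$, which is in bijection with $\theta A\cap B$, has dimension at most $\dim\mathcal{J}-\dim G$. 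Applying this with $X=\mathcal{O}$ and $G=\Theta$ gives, for generic $\theta$,
\[
\dim\big((\theta\M\times\theta\M)\cap\Sigma^{*}\big)\;\le\;2M+(2\dim V-K)-2\dim V\;=\;2M-K .
\]
If $K>2M$ the right-hand side is negative, so the intersection is empty: every $x\in\theta\M$ with $h\cdot x\in\theta\M$ satisfies $h\cdot x=x$ (for $\AffV$) or $h\cdot x=\pm x$ (for $\GLV$), which is part 2. If merely $K>M$, then projecting $\theta A\cap B$ to the first coordinate shows the ``bad'' set $\{x\in\theta\M:\exists\,h,\ h\cdot x\in\theta\M,\ h\cdot x\ne\pm x\ (\text{resp.}\ \ne x)\}$ has dimension $\le2M-K<M=\dim\M$, so a generic $x\in\theta\M$ is good; this is part 1.

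I expect the main obstacle to be the third step: making the double-fibration argument rigorous in the semi-algebraic category, where there is no clean decomposition into irreducible components, so that one must use the stratified form of the fibre-dimension theorem for semi-algebraic maps and verify that the exceptional parameter set is genuinely of dimension strictly less than $\dim\Theta$ (not merely a proper subset). The remaining points---that $\Sigma$, $\Sigma^{*}$, $\mathcal{O}$, and $\mathcal{J}$ are all semi-algebraic, and that replacing $\M\times\M$ by $A=(\M\times\M)\cap\mathcal{O}$ discards nothing relevant because $\Sigma^{*}$ avoids $(V\times V)\setminus\mathcal{O}$---are routine.
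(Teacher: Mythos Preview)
Your proposal is correct and is essentially the paper's argument repackaged: the paper also builds the incidence set $\B=\{(x,y,\theta)\in\M\times\M\times\Theta:\theta x\in H(\theta y),\ x\not\sim y\}$, bounds $\dim\B\le 2M+\dim\Theta-K$, and projects to $\Theta$ (handling the generic-fiber step via Nash stratification and the semi-algebraic Sard theorem, exactly the obstacle you flag). The only difference is bookkeeping---the paper obtains the $-K$ by bounding the fiber $\Theta(x,y)$ directly with an explicit basis/column argument (and a three-case split $S_1,S_2,S_3$ for $\AffV$), whereas you obtain it from $\dim\Sigma^{*}\le 2\dim V-K$ together with transitivity of $\Theta$ on the open orbit $\mathcal{O}\subset V\times V$; your formulation is a bit cleaner in that it replaces the case analysis by the single structural observation $\Sigma^{*}\subseteq\mathcal{O}$.
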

An illustration of this theorem when $V=\RR^2$ is provided in Figure \ref{fig}. Panel (a) depicts an $M=0$ dimensional semialgebraic set $\M$,  which we think of as the generic affine image of an axis-aligned finite grid. We consider the orbit of points in this set (like the point denoted by the green arrow), under the action of $\{-1,1\}^2$ by elementwise multiplication. Since   $2=K>0=2M$, the theorem guarantees that the orbits will \rev{ only have a trivial intersection with the grid (namely, only the point itself is both in its orbit and in the set)}. In panel (b), we consider the same group action, now on a one-dimensional semi-algebraic set, a union of a \rev{ray} and a circle. In this case, $K=2$, which is larger than $M=1$ but not larger than $2M$. Therefore, the theorem guarantees that orbits of generic points (like the point marked by a green arrow) will \rev{not intersect the set non-trivially}, but it is possible that orbits of a measure-zero set of points, like the point denoted by a red arrow, will have a non-trivial intersection with the set. Finally, in panel (c), we consider an $M=1$ dimensional semi-algebraic set, and the orbits of $S^1$, a one-dimensional group. In this case, we have $K=1=M $, so the conditions of the theorem are violated. Indeed, the orbits of all non-zero points have a non-trivial intersection.   

\begin{figure}
\includegraphics[width=\columnwidth]{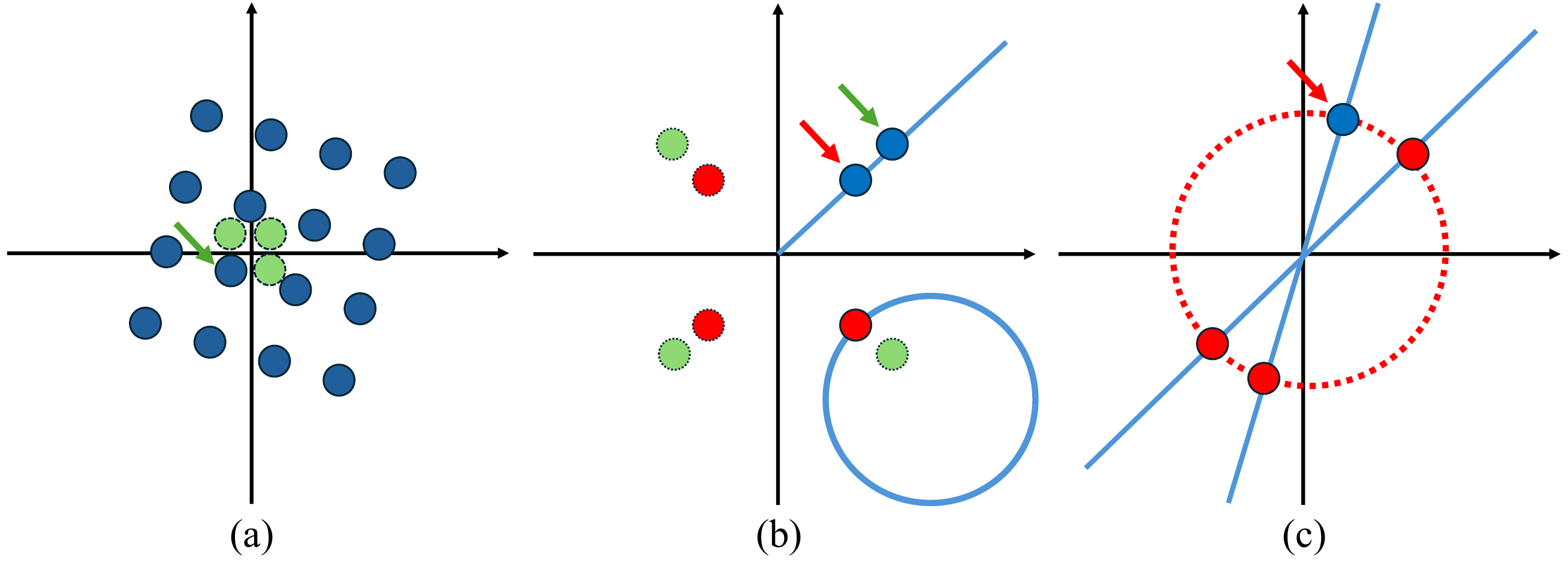}
\label{fig}
\caption{Panels (a)-(c) show three different semi-algebraic sets, colored blue: a zero-dimensional grid in (a), a union of a \rev{ray} and a circle in (b), and a union of two lines in~(c). Each panel selects points in the semi-algebraic set (denoted by red or green arrows) and shows their orbits under the zero-dimensional group $\ZZ_2^2$ (for (a) and (b)) or the one-dimensional group $S^1$ in (c). In (a), the orbits do not have a non-trivial intersection with the semi-algebraic set, in (b) most orbits have only a trivial intersection, but the orbits of a measure zero set of points (like the point denoted by the red arrow) will have a non-trivial intersection. Non-trivial intersections in (c)  are inevitable at all points except for zero. The different behavior in these three cases is determined (generically) by the dimension of the group orbit, the dimension of the semialgebraic set, and the ambient dimension, see Theorem~\ref{thm:affine}. 	
}	
\end{figure}

We now present a similar theorem with respect to generic orthogonal translates of the set, rather than affine or linear translates.

\begin{theorem}[Main theorem, orthogonal transformations]\label{thm:orthogonal}
Let $V$ be a continuous (and hence orthogonal) representation of a compact Lie group $H$. Let $\M \subseteq V$ be a semi-algebraic set of dimension $M$,
and 
let $K$ be as in~\eqref{eq:K}.
Then,
\begin{enumerate}
	\item If  $K>M+2$, then for a generic $x$ in an $\OV$-generic semi algebraic subset $\M$ of dimension $M$, if $h \cdot x \in \M$ for some $h\in H$, then $h \cdot x = \pm x$.
 	\item  If $K>2M+2$, then for all $x$ in an $\OV$-generic 
  semi-algebraic set $\M$ of dimension $M$, if
$h \cdot x \in \M$ for some $h \in H$, then $h \cdot x = \pm  x$.
\end{enumerate}
If, in addition, the orbits of $H$ are connected (for example, if $H$ is connected), then we obtain the improved bounds of 
$K > M+1$ and $K > 2M+1$ respectively. 
\end{theorem}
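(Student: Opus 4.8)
The core strategy is to reduce the orthogonal case to the affine case of Theorem~\ref{thm:affine}, using a dimension-counting argument on an appropriate incidence variety together with the fact that $\O(V)$ is itself a compact semi-algebraic group. Let me think about how the bounds $M+2$ and $2M+2$ (vs.\ $M$ and $2M$ in the linear/affine case) arise. The natural object to study is the ``bad set''
\[
B = \{(g, x) \in \O(V) \times \M : \exists\, h \in H \text{ with } h \cdot (g\cdot x) \in g\cdot \M,\ h\cdot(g\cdot x) \neq \pm g\cdot x\},
\]
and one wants to show its projection to $\O(V)$ is not dominant (for part 2) or that its fiber over a generic $g$ has dimension $< M$ (for part 1). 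Equivalently, pulling back by $g$, consider the set of triples $(g,x,h) \in \O(V)\times\M\times H$ with $g^{-1}h g \cdot x \in \M$ and $h\cdot(g\cdot x)\neq \pm g\cdot x$. The map $(g,h) \mapsto g^{-1} h g$ has image the union of conjugates, and the key point is that conjugation by $\O(V)$ moves a fixed orbit-type around; the ``extra'' dimensions compared to the affine case come from the dimension of the relevant double-coset or conjugacy data. I expect the clean approach is: first invoke Theorem~\ref{thm:affine} applied not to $H$ but to the (still compact) group $\O(V)$ acting on $V$, or more precisely to an enlarged group generated by $H$ and enough orthogonal elements, and extract the $+2$ from $\dim \O(V) - \dim(\text{stabilizer of a line})$ type considerations. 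Actually, I believe the right move is subtler: one shows that if $\theta\cdot\M$ fails transversality with respect to $H$, then a related semi-algebraic set fails transversality with respect to $\O(V)$ acting on $V$, and since $k(\O(V)) = \dim V - 1$ for $\dim V \ge 1$ (the orbits of $\O(V)$ are spheres, of dimension $\dim V - 1$), the effective dimension for $\O(V)$ is $1$, which is exactly where the arithmetic of ``$K$ vs.\ $K-2$'' must be tracked carefully.

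The key steps, in order, would be: (i) Set up the incidence variety $Z \subseteq \O(V) \times \M \times H$ of ``bad'' triples $(\theta, x, h)$ where $h\theta x \in \theta\M$ but $h\theta x \ne \pm\theta x$, using the semi-algebraic Hausdorff-dimension machinery (finite stratification into manifolds, behavior of dimension under semi-algebraic maps) that underlies Theorem~\ref{thm:affine}. (ii) Stratify $Z$ by $H$-orbit type / stabilizer conjugacy class and bound $\dim Z$ fiberwise: over a point $y = \theta x$, the set of $h \in H$ with $h y \in \theta\M \setminus \{\pm y\}$ has dimension at most $(M - K) + k(H)$ roughly (this is essentially the content of the affine theorem's proof, transported); combine with $\dim \M = M$ and $\dim\O(V)$ to get $\dim Z \le \dim\O(V) + 2M - K + (\text{correction})$, where the correction of $+1$ or $+2$ comes from the two extra continuous parameters introduced by allowing $\theta$ to rotate $x$ along a sphere and simultaneously deform within $\M$ — and drops to $+1$ when orbits are connected because then the $\pm$ ambiguity collapses one discrete factor into the identity component. (iii) Conclude: if $K > 2M+2$, then $\dim Z < \dim\O(V)$, so the generic $\theta$ lies outside the projection of $Z$, giving part 2; for part 1, show the generic fiber $Z_\theta \subseteq \M$ has $\dim Z_\theta \le 2M - K + 2 < M$ when $K > M+2$, so the bad $x$'s form a lower-dimensional subset of $\M$. (iv) Handle the connected-orbit improvement by replacing ``$\pm x$'' with ``$x$'' and re-running the count, saving one unit.

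The main obstacle I anticipate is getting the extra additive constant exactly right — that is, proving the dimension bound $\dim Z \le \dim \O(V) + 2M - K + 2$ with the correct constant $2$ (not $3$ or more), and separately that it improves to $1$ under connectedness. This requires care in two places: first, understanding precisely how much the orbit $H\cdot(\theta x)$ can "spread out" as both $\theta$ varies in $\O(V)$ and $x$ varies in $\M$ — naively one overcounts because changing $\theta$ and changing $x$ can produce the same point $\theta x$, so one must quotient by the stabilizer-type redundancy, and the net effect is that the genuinely new parameter is the direction of $x$ on its sphere, contributing the relevant small constant; and second, the $\pm$ sign issue, where one must verify that the locus where $h\cdot x = -x$ (rather than $=x$) is itself well-behaved and that forbidding only $h\cdot x = \pm x$ rather than $h \cdot x = x$ costs exactly the discrete factor that the connectedness hypothesis removes. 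I would expect to isolate a clean lemma — something like: for a compact Lie group acting orthogonally, the set $\{(\theta,x) : \theta\text{-conjugate of }H\text{ meets the line }\R x\text{ nontrivially away from }\pm x\}$ has codimension at least $K-2$ in $\O(V)\times V$ — and then bootstrap everything from that, in close parallel with how Theorem~\ref{thm:affine} presumably extracts codimension $K$ in the linear case.
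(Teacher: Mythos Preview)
Your high-level framework (incidence variety plus dimension counting, then compare $\dim Z$ to $\dim\O(V)$) is right and matches the paper's fiber lemma. But there is a genuine gap: you have not identified where the $+2$ actually comes from, and your speculative explanations (conjugation $g^{-1}hg$, double-coset data, ``direction of $x$ on its sphere'', the $\pm$ discrete factor) are all off-target.

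The paper slices the incidence set the other way: for fixed $x,y\in\M$ with $x\not\sim y$ it bounds $\dim\O(x,y)=\{A\in\O(V):Ax\in H(Ay)\}$. Take $x,y$ linearly independent, normalize $x$, and write $y=ae_1+be_2$ in an orthonormal frame with $e_1=x$. Writing $w_i=Ae_i$, the condition $Ax\in H(Ay)$ becomes $w_2\in -ab^{-1}w_1+b^{-1}Hw_1$. Now the key point: in $\GL(V)$ the second column $w_2$ ranges over all of $\R^N$ (dimension $N$), but in $\O(V)$ it ranges only over $S^{N-2}\subset w_1^\perp$ (dimension $N-2$). So the codimension of the constraint drops from $N-k(H)=K$ to $(N-2)-k(H)=K-2$. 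That is the entire source of the $+2$; no conjugation or orbit-of-$\O(V)$ reasoning enters.

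Your account of the connected-orbit improvement is also wrong. It does not come from collapsing a $\pm$ ambiguity. It comes from observing that if $Hw_1$ is connected and positive-dimensional, it is an irreducible real algebraic set, so the extra scalar equation $\langle w_2,w_2\rangle=1$ (expanded as $a^2b^{-2}+b^{-2}-2ab^{-2}\langle w_1,hw_1\rangle=1$) is non-constant on $Hw_1$ and cuts its dimension by one more, giving $\dim\O(x,y)\le\dim\O(V)-K+1$. Without identifying this column-wise sphere-bundle calculation, your step (ii) cannot produce the correct constant, and the proposal as written would not close.
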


\paragraph{The dimension of orbits.} 
The results stated above depend on the dimension of the orbit $Hx$.
If $H$ is a finite group, then the dimension of the orbit is always zero. If $H$ is positive-dimensional, then the orbits of $H$ can have dimension strictly smaller than $\dim H$. For example, if
$H = \GL(V)$ acting by linear automorphisms on an $N$-dimensional
(real) vector space $V$, then
$\dim H = N^2$, but non-zero vectors all lie in a single orbit of
dimension equal to $N = \dim V$, and the origin is its own orbit.
If $H = \O(V)$, then $\dim H = \binom{N}{2}$, and the orbits
of non-zero vectors are spheres and thus have dimension $N-1$.

If $H$ is connected, then its orbits are always connected. However, a non-connected group can have all of its orbits connected. 
For example, the orbits of $\O(N)$ acting on $\R^N$ are connected since they are spheres, but $\O(N)$ is not a connected group.

\paragraph{The tightness of the bounds in
Theorem~\ref{thm:affine}.} 
The bound $K>M$ in Theorem~\ref{thm:affine} is sharp, 
since if $H$ is finite, then $K = \dim V$. And if we take
$\M = V$, then clearly the conclusion of the theorem
cannot hold for any translate of $V$ since all translates are equal to
$V$. \rev{Here is an example to show that the bound $K > 2M$ in the second part of the theorem is sharp. Consider the group
$H = \{\pm{1}\}^2$ acting by coordinate-wise sign change. If $\M$ is the parabola 
defined by the equation
$y -x^2=0$, then
a $\GL(V)$-translate of $\M$ is a parabola with equation $ax + by - (c x + dy)^2=0$,  where $ad - bc \neq 0$.
There is a set of $(a,b,c,d)$ of positive measure in $\GLV$ where this parabola contains real points of the form $(x,y), (-x,y)$ with $xy \neq 0$.
For example, taking $(a,b,c,d) = (1,2,2,4)$, the points $(\pm \frac{1}{8}, \frac{1}{16})$
both lie on the parabola. Likewise, a translate of $\M$ by an element of $\AffV$ has equation
$ax + by +A  - (c x + dy+B)^2=0$, where $A, B$ are arbitrary scalars. Once again there is a set of positive measure in $\AffV$ where
the translated parabola has real points of the forms $(x,y), (-x,y)$ with $xy \neq 0$.}

\paragraph{The tightness of the bounds in Theorem~\ref{thm:orthogonal}.}
The following examples show that the bounds
$K> M+2$ (resp. $2M +2$) and $K> M+1$ (resp. $2M+1$) in  Theorem~\ref{thm:orthogonal} are also sharp.
Let $\mathcal{M} = \{(1,0), (-1,0), (0,1), (0,-1)\} \subset\R^2$.
If we take $H = \ZZ_4$, where the generator of $\ZZ_4$ acts by $(x,y) \mapsto (y,-x)$, then $K=2$ and $\dim \mathcal{M} =0$, but any rotation of $\mathcal{M}$ produces four points which lie in a single $H$ orbit.
Similarly, if we take $H = \SO(2)$, then the orbits of $H$ are connected
and one-dimensional so $K=1$, but again any rotation $\mathcal{M}$  consists of points lying in the same $H = \SO(2)$ orbit. 
\rev{Note that for a prior $\M$ of dimension zero transversality holds at a generic point of $\M$ if and only if it holds at all points.}
\rev{\paragraph{The implication of the genericity assumption.} The main results of this paper rely on the assumption that the signal lies in a generic translate of a low-dimensional semi-algebraic set. While this implies that the results hold for almost every such translate, they do not necessarily apply to a fixed given semi-algebraic set. To illustrate this, consider the three primary examples introduced in Section~\ref{sec:introduction}: linear priors, sparsity, and deep generative models.

In the case of deep generative models, the genericity assumption is typically not a major limitation, as it can often be satisfied by appending a generic (e.g., random) layer to a fixed neural network. For the sparsity assumption, if the sparse basis is learned from data---as in dictionary learning~\cite{elad2010sparse}---it is reasonable to expect the model to be generic. However, when sparsity is defined with respect to a fixed, application-driven basis, our results no longer apply. A prominent example is X-ray crystallography, where signals are sparse in the standard basis. Such cases require alternative analytical tools~\cite{elser2018benchmark,bendory2020toward}; see further discussion in Section~\ref{sec:phase_retrieval}.

The same caveat applies to linear priors. If the signal resides in a low-dimensional learned subspace, such as one obtained via PCA, the genericity assumption is likely to hold. In contrast, if the subspace is fixed, as in coherent diffraction imaging~\cite{barnett2022geometry}, our theoretical results do not apply.}

\rev{\paragraph{Context of the transversality theorem.}
As noted above, previous results in phase retrieval, such as frame phase retrieval, can be viewed as transversality theorems for the intersections of linear subspaces with group orbits. To the best of our knowledge, general results like Theorems~\ref{thm:affine}, \ref{thm:orthogonal} have not previously appeared in the math literature. However, our work is inspired by classical moving lemmas, which were proved in intersection theory~\cite{roberts1972moving}. In algebraic geometry,a moving lemma states that if $\alpha$ and $\beta$ are projective cycles\footnote{A cycle of dimension $k$ is a finite formal sum of $k$-dimensional subvarieties.} in the projective space $\mathbb{P}^n$ of dimensions $k$ and
$\ell$, respectively, then the cycle $\alpha$ can be moved in its rational equivalence class to a cycle $\alpha'$ so that $\alpha'$ and $\beta$ intersect in a cycle of the expected dimension $k+\ell -n$ (or
is empty if $k + \ell - n$ is negative).
In the context of group actions,  Kleiman's transversality theorem~\cite{kleiman1974transversality} states that if $V$ and $W$ subvarieties of the orbit $X$ of an algebraic group $H$, then for a general element $\sigma \in H$ the translate $\sigma \cdot V$ intersects $W$ in dimension $\dim V + \dim W - \dim X$, or is empty if this number is negative.

Here, we consider a different problem---that of identifying translates of a semi-algebraic set with a transversality property with respect to the orbits of a compact group $H$. However, the idea of finding a translate of a given set with desired properties is inspired by these older theorems. As is in the case with Kleiman's theorem, we seek to translate by an automorphism of the ambient space, which for us is
the representation $V$ of $H$. The specific results depend on the particular choice of automorphisms, and we have focused on the 
groups $\AffV$, $\GLV$, and $\SO(V)$ because they naturally occur in applications of interest.}

\section{Signal recovery from the second moment} \label{sec:signal_recovery}
Most applications we consider in this paper involve 
signal recovery from the second moment. Thus, we are interested in the transversality of semi-algebraic sets with orbits of
the ambiguity group of the second moment. 
In this section, we first present the model of recovering a signal translated by random group actions from the second moment, and then specialize Theorem~\ref{thm:affine} and Theorem~\ref{thm:orthogonal} to this case. Finally, \rev{we} discuss two important applications: factoring Gram matrices and phase retrieval.

\subsection{Problem formulation}
Consider the problem of recovering a signal $x\in V$ from the second moment of:
\begin{equation} \label{eq:signal_recovery}
    y = g\cdot x, \quad  g\in G,
\end{equation}
where $g$ is a random variable drawn from a uniform (Haar) distribution over a compact group $G$ and $V$ is a finite-dimensional orthogonal representation.
We are interested in determining the $G$-orbit of $x$ from the second moment $\E_{g\sim\text{Unif}(G)} [yy^T]$. 

A general finite-dimensional representation of a compact group
$G$ can be decomposed as 
\begin{equation}
	V = \oplus_{\ell = 1}^L V_\ell^{\oplus R_\ell},
\end{equation}
with the $V_\ell$ are
distinct (non-isomorphic)
irreducible representations of $G$ of dimension $N_\ell$. 
An element of $x \in V$ has a unique $G$-invariant decomposition as a sum
\begin{equation} 
	x = \sum_{\ell = 1}^L \sum_{i= 1}^{R_\ell} x_\ell[i], \label{eq.decomp}
\end{equation} 
where $x_\ell[i]$ 
is in the $i$-th copy of the irreducible representation $V_\ell$.
Conveniently, once a basis for each irreducible representation is fixed, an element of $V$ can be represented by an $L$-tuple
$(X_1, \ldots ,X_L)$, where $X_\ell$ is an 
$N_\ell \times R_\ell$ matrix corresponding to the coefficients of an element in the summand $V_\ell^{\oplus R_\ell}$ according to the given basis.

In~\cite{bendory2022sample}, it was shown that the second moment of~\eqref{eq:signal_recovery} determines the $L$-tuple of $R_\ell \times R_\ell$
symmetric matrices $(X_1^TX_1, \ldots , X_L^T X_L)$.  
	This, in turn, implies that a vector $x$ is determined from the second moment up to the action
	of the product of orthogonal matrices~$H =\prod_{\ell =1}^L \O(N_\ell)$.  
Thus, a prior on $x$ is required to recover a signal uniquely. For example, 
in~\cite{bendory2022sample,ghosh2022sparse,bendory2022sparse,bendory2023autocorrelation,ghosh2023minimax}, it was shown that if the signal is sparse under some basis, then there is a unique sparse signal in the orbit $H$.
Semi-algebraic priors were studied in~\cite{bendory2023phase} for representations composed of irreducible representations with multiplicity one (that is, $R_\ell=1$ for all $\ell$).

 \subsection{Main results for signal recovery from the second moment}
We view $V$ as a representation of $H$, where the $\ell$-th component of $H$, $\O(N_\ell)$, acts diagonally. This means that if we view an element of $V_\ell^{R_\ell}$ as a tuple $v_\ell = (v_\ell[1], v_\ell[2], \ldots , v_\ell[R_\ell])$
 with $v_\ell[r] \in V_\ell$, then $g \cdot v_\ell = (g \cdot v_\ell[1], g \cdot v_\ell[2],
 \ldots g \cdot v_\ell[R_\ell])$. Then,
 the \rev{generic} orbits of $H$ have dimension 
 \begin{equation} \label{eq:k_signal_recovery}
 k = \sum_{\ell =1}^L k_\ell, \quad     k_\ell = \dim \O(N_\ell) - \dim \O(N_\ell - R_\ell),
 \end{equation}
with the understanding that $\dim \O(N_\ell - R_\ell) =0$
if $N_\ell - R_\ell \leq 0$. \footnote{Recall that $\dim \O(N) = N(N-1)/2$.} 
Moreover, the orbits are connected if and only if $\dim V_\ell > 1$ for all $\ell$.
Thus, we derive the following result. 

\begin{corollary}\label{cor:signal_recovery}
Let $V = \oplus_{\ell =1}^L V_\ell^{\oplus R_\ell}$  be a real representation of a compact group $G$. Let $\M \subseteq V$ be a semi-algebraic set of dimension $M$, and  
let $K= \dim V - k,$ where $k$ is given in~\eqref{eq:k_signal_recovery}. 
Then, 
\begin{enumerate}
	\item If $K>M$, then a generic vector $x$ in a $\GLV$ or $\AffV$-generic semi-algebraic set $\M$ of dimension $M$ is determined
 (up to a sign for $\GLV$) by the second moment of~\eqref{eq:signal_recovery}. 
	\item  If $K>2M$, then every $x$ in a $\GLV$ or $\AffV$-generic 
 semi-algebraic set $\M$ of dimension $M$ is determined (up to a sign
 for $\GLV$) by the second moment of~\eqref{eq:signal_recovery}.
	\item If  $K>M+2$, then a generic vector $x$ in an $\OV$-generic semi-algebraic set $\M$ of dimension $M$
 is determined, up to a sign, by the second moment of~\eqref{eq:signal_recovery}.
	\item  If $K>2M+2$, then every vector $x$ in an $\OV$-generic 
 semi-algebraic set $\M$ of dimension $M$ is determined, up to a sign, by the second moment of~\eqref{eq:signal_recovery}.
\end{enumerate}
\end{corollary}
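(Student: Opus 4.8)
The plan is to deduce Corollary~\ref{cor:signal_recovery} directly from Theorem~\ref{thm:affine} and Theorem~\ref{thm:orthogonal} applied to the group $H = \prod_{\ell=1}^L \O(N_\ell)$ acting on $V$ as described, so the only work is to identify the abstract hypotheses of the two theorems with the concrete quantities appearing in the corollary. First I would record that the second moment $\E_{g\sim\mathrm{Unif}(G)}[yy^T]$ determines exactly the $L$-tuple $(X_1^TX_1,\dots,X_L^TX_L)$ by the result of \cite{bendory2022sample} quoted above, and that two elements $x,x'$ of $V$ have the same such tuple if and only if $x' = h\cdot x$ for some $h \in H$. Thus the map $F$ sending $x$ to its second moment is, up to reparametrization, precisely an $H$-orbit separator: $F(x)=F(x')$ iff $Hx = Hx'$. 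Consequently, ``$x$ is determined from its second moment within $\M$'' is literally the statement that for $x\in\M$, if $h\cdot x\in\M$ then $h\cdot x = x$, which is the transversality conclusion of the main theorems (and ``up to a sign'' matches the $\pm x$ conclusion in the $\GLV$ and $\OV$ cases).

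Next I would verify the dimension bookkeeping, i.e.\ that the quantity $K = \dim V - k$ in the corollary, with $k = \sum_\ell k_\ell$ and $k_\ell = \dim\O(N_\ell) - \dim\O(N_\ell - R_\ell)$, coincides with $K = \dim V - k(H)$ from~\eqref{eq:K}. For this I must check that the maximal orbit dimension of $H$ acting diagonally on $V = \oplus_\ell V_\ell^{\oplus R_\ell}$ equals $\sum_\ell k_\ell$. Since $H = \prod_\ell \O(N_\ell)$ acts factor-wise, it suffices to compute, for a single $\O(N)$ acting diagonally on $(\R^N)^{\oplus R}$, the maximal dimension of an orbit; a generic $R$-tuple of vectors in $\R^N$ has stabilizer equal to the subgroup of $\O(N)$ fixing the span of $R$ generic vectors pointwise, which (for $R \le N$) is $\O(N-R)$, giving orbit dimension $\dim\O(N) - \dim\O(N-R) = k_\ell$ (and for $R \ge N$ the stabilizer is trivial, matching the convention $\dim\O(N-R)=0$). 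Summing over $\ell$ gives $k(H) = k$. I would also note the connectedness claim: the orbits of $H$ are connected iff each factor's orbits are, and the $\O(N_\ell)$-orbits in $V_\ell^{\oplus R_\ell}$ are connected precisely when $N_\ell = \dim V_\ell > 1$ (for $N_\ell = 1$ the orbit of a nonzero vector is a pair of antipodal points); this is what licenses the ``improved bounds'' branch of Theorem~\ref{thm:orthogonal} whenever all $\dim V_\ell > 1$, though since the corollary as stated uses the weaker bounds $M+2$, $2M+2$, I need not invoke it.

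With these identifications in hand, each of the four items is immediate: items 1 and 2 are the two cases of Theorem~\ref{thm:affine} with the bounds $K>M$ and $K>2M$, transported through the orbit-separation equivalence above (the ``$\pm$'' ambiguity for $\GLV$ being inherited verbatim, while $\AffV$ gives exact recovery); items 3 and 4 are the two cases of Theorem~\ref{thm:orthogonal} with $K>M+2$ and $K>2M+2$, again yielding recovery up to a global sign. The only genuinely substantive point — and the one I would present carefully rather than as a one-liner — is the computation of $k(H)$, i.e.\ confirming that the generic stabilizer of the diagonal $\O(N_\ell)$-action on $R_\ell$ copies of $V_\ell$ is $\O(N_\ell - R_\ell)$; everything else is an application of already-established results. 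A subtle secondary check is that $F$ (or any fixed orbit-separating measurement) is semi-algebraic so that the genericity framework of Section~\ref{sec:main_result} applies, but this holds because the second moment is a polynomial map in $x$.
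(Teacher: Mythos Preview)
Your proposal is correct and follows essentially the same route as the paper: the paper simply states (just before the corollary) that the generic $H$-orbit has dimension $k=\sum_\ell k_\ell$ and that the orbits are connected iff all $\dim V_\ell>1$, then writes ``Thus, we derive the following result,'' leaving the deduction from Theorems~\ref{thm:affine} and~\ref{thm:orthogonal} implicit. You have spelled out the stabilizer computation and the orbit-separation reduction more carefully than the paper does, but there is no difference in approach.
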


In Section~\ref{sec:cryoEM} we discuss the implications of Corollary~\ref{cor:signal_recovery} to cryo-EM---the prime motivation of this paper. Before we do this, we briefly discuss two additional applications: factoring Gram matrices under semi-algebraic priors and phase retrieval.

\subsection{Applications}

\subsubsection{Factoring Gram matrices under semi-algebraic priors}
As a warm-up application, we 
consider the simple case above where $L =1$; i.e., $V = W^{\oplus R}$ with
$W$ an $N$-dimensional irreducible representation of $H$. 
An element $X \in V$ is simply an $N \times R$
matrix, and the second moment is the Gram matrix
$X^TX$. Generally, the Gram matrix determines $X$ up to multiplication by a matrix in $O(N)$. However, Corollary~\ref{cor:signal_recovery} implies
that an $N \times R$ matrix can be determined from its Gram matrix if it lies in a suitably low-dimensional generic semi-algebraic subset
of $\R^{N \times R}$.

With this notation, we have the following result.

\begin{corollary} \label{cor:gram}
Consider the problem of recovering a matrix $X\in\R^{N\times R}$ from its Gram matrix $X^TX$.
\begin{enumerate} 
\item If $X$ is a generic matrix in a $\GLV$ or $\AffV$-generic semi-algebraic subset 
of dimension $M < NR - \left(\dim\O(N) - \dim \O(N-R)\right)$, then
$X$ can be recovered (up to a sign for $\GLV$) from its Gram matrix.
Likewise, if $2M < NR - \left(\dim \O(N) - \dim \O(N-R)\right)$, then every
$X$ in a $\GLV$ or $\AffV$ can be recovered (up to a sign for $\GLV$) from its Gram matrix.

\item If $X$ is a generic matrix in an $\OV$-generic semi-algebraic subset 
of dimension $M < NR - \left(\dim\O(N) - \dim \O(N-R)\right)-2$, then
$X$ can be recovered (up to a sign for $\GLV$) from its Gram matrix.
Likewise, if $2M < NR - \left(\dim \O(N) - \dim \O(N-R)\right)-2$, then every
$X$ in an $\OV$-generic semi-algebraic subset can be recovered up to a sign from its Gram matrix.
\end{enumerate}
\end{corollary}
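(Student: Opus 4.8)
The plan is to derive Corollary~\ref{cor:gram} as the special case $L=1$ of Corollary~\ref{cor:signal_recovery}, so the work is almost entirely bookkeeping. First I would set $V = W^{\oplus R}$ with $W$ an $N$-dimensional irreducible representation of $H = \O(N)$ acting diagonally, so that an element of $V$ is identified with an $N \times R$ matrix $X$, and $\dim V = NR$. By the result of~\cite{bendory2022sample} recalled in Section~\ref{sec:signal_recovery}, the second moment of~\eqref{eq:signal_recovery} determines $X^{T}X$, and conversely knowing $X^{T}X$ is exactly knowing $X$ up to the $\O(N)$-action on the left, i.e.\ up to the diagonal $H$-action on $V$. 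So ``$X$ recoverable from its Gram matrix'' is literally ``$\M$ transverse to the $H$-orbits'' in the sense of Theorems~\ref{thm:affine} and~\ref{thm:orthogonal}.

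Next I would compute $K$ for this $H$-action. From~\eqref{eq:k_signal_recovery} with $L=1$, the generic orbit dimension is $k = \dim \O(N) - \dim \O(N-R)$ (with the convention $\dim \O(m) = 0$ for $m \le 0$), hence $K = \dim V - k = NR - \left(\dim \O(N) - \dim \O(N-R)\right)$. Plugging this $K$ into Corollary~\ref{cor:signal_recovery}: part~1 there gives recovery of a generic $X$ in a $\GLV$- or $\AffV$-generic set when $M < K$, which is precisely the first displayed inequality in part~(1) of Corollary~\ref{cor:gram}; part~2 gives recovery of every $X$ when $2M < K$, the second inequality; and parts~3 and~4, which carry the extra $+2$ in the orbit-connectedness-free case, yield exactly the two inequalities with the ``$-2$'' appearing in part~(2) of Corollary~\ref{cor:gram}. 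The ``up to a sign for $\GLV$'' clause is inherited verbatim from Corollary~\ref{cor:signal_recovery}.

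The only genuinely substantive point, and the one I would be most careful about, is justifying that for $L=1$ one may legitimately invoke the $K > M$ (resp.\ $K > 2M$) thresholds of Theorem~\ref{thm:affine} rather than only the $K > M+2$ thresholds of Theorem~\ref{thm:orthogonal}: this is fine because Corollary~\ref{cor:signal_recovery} already packages both, and parts~(1)--(2) of Corollary~\ref{cor:gram} are stated for $\GLV$/$\AffV$-generic sets while parts~(3)--(4) are the $\OV$-generic versions. I would also note the edge cases: when $R \ge N$ the orbit is generically all of $\O(N)$ so $k = \dim \O(N)$ and $K = NR - \binom{N}{2}$; and when $R < N$ the subtraction $\dim\O(N) - \dim\O(N-R) = \binom{N}{2} - \binom{N-R}{2} = R\big(N - \tfrac{R+1}{2}\big)$ gives a clean closed form one could optionally record. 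No obstacle beyond this matching of hypotheses is expected; the corollary is a direct specialization.
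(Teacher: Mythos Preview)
Your proposal is correct and matches the paper's approach exactly: the paper presents Corollary~\ref{cor:gram} as the $L=1$ specialization of Corollary~\ref{cor:signal_recovery}, noting that for $V = W^{\oplus R}$ the second moment is precisely the Gram matrix $X^T X$, and stating the result without further argument. Your bookkeeping---computing $K = NR - (\dim\O(N) - \dim\O(N-R))$ from~\eqref{eq:k_signal_recovery} and matching the four thresholds---is exactly what is needed to fill in the one-line deduction.
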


\subsubsection{Phase retrieval} \label{sec:phase_retrieval}
The phase retrieval problem entails recovering a signal from its power spectrum: the magnitudes of its Fourier transform. 
Clearly, the power spectrum determines the signal up to a product of $N$ phases (unitary matrices of dimension 1) in Fourier space. 
This problem  
has numerous applications in optical imaging, X-ray crystallography and more, see~\cite{shechtman2015phase,bendory2017fourier,fannjiang2020numerics,grohs2020phase}, and references therein. 
Semi-algebraic priors are essential in phase retrieval. 
For example, in X-ray crystallography, the goal is to recover the sparsely-spread atoms that \rev{form} an image~\cite{elser2018benchmark,bendory2020toward,bendory2022algebraic}. 
In addition, deep generative models have been proven effective in different branches of phase retrieval in recent years~\cite{sinha2017lensless,rivenson2018phase,metzler2018prdeep,hand2018phase,deng2020learning}.

Explicitly, the goal in phase retrieval is to recover a signal $x\in\R^N$ 
from its power spectrum, 
\begin{equation} \label{eq:powerspectrum}
    \left(\hat{x}[0]^*\hat{x}[0], \hat{x}[1]^*\hat{x}[1],\ldots, \hat{x}[N-1]^*\hat{x}[N-1]\right),
\end{equation}
where $\hat{x}[i]$ is the $i$-th entry of the discrete Fourier transform
of the signal $x$. The analysis of \cite[Section 6.2]{edidin2023generic} shows that the power spectrum is equivalent to the second moment of $x$ with respect to the action of the dihedral group $D_{2N}$, which acts on $\R^N$ by \rev{circular} translation and reflection.
As a representation of $D_{2N}$, the vector space $\R^N$ decomposes as
a sum of irreducibles as follows.
\begin{enumerate}
\item If $N$ is even, then
$$\R^N = V_0 \oplus V_1 \oplus \ldots  \oplus V_{N/2-1} \oplus V_{N/2},$$
where  $V_0$ and $V_{N/2}$ are one-dimensional and all other summands are distinct two 
dimensional.
\item If $N$ is odd, then
$$\R^N = V_0 \oplus V_1 \oplus \ldots \oplus V_{(N-1)/2},$$
where  $V_0$ is one-dimensional and all other summands are distinct two 
dimensional.
\end{enumerate}
In particular, the dimension of the orbits of $H= \prod \O(V_i)$ is
$N/2-1$ if $N$ is even and $(N-1)/2$ if $N$ is odd.
Thus, we can invoke Corollary~\ref{cor:signal_recovery} with
$K = N/2 +1$ ($N$ even) and $K = (N+1)/2$ ($N$ odd) 
to obtain the following result. 

\begin{corollary} \label{cor:phase_retrieval} 
Consider the phase retrieval problem of recovering a signal $x\in\R^N$ from its power spectrum~\eqref{eq:powerspectrum}. 
\begin{enumerate}
\item
If $\M$ is a 
$\GLV$ or $\AffV$-generic semi-algebraic set of dimension $M$ with $N \geq 2M$, then the generic vector $x \in \M$ is
determined (up to a sign for $\GLV$) from its power spectrum.
Likewise, if $N\geq 4M$, then every vector in $\GLV$ or $\AffV$-generic semi-algebraic set can be recovered (up to
a sign for $\GLV$) from its power spectrum.

\item If $\M$ is an $\OV$-generic semi-algebraic set of
dimension $M$ with $N \geq  2M+4$, then a generic vector $x \in \M$ is determined, up to a sign, from its power spectrum. Likewise, if $N \geq 4M+4$, then every vector in 
an $\OV$-generic semi-algebraic set is determined, up to a sign, from its power spectrum.
\end{enumerate}
\end{corollary}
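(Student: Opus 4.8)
\textbf{Proof proposal for Corollary~\ref{cor:phase_retrieval}.}

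The plan is to reduce the phase retrieval statement to Corollary~\ref{cor:signal_recovery} by identifying the group, the representation, and the invariant map, and then simply plugging in the correct value of $K$. First I would invoke the analysis cited from \cite[Section 6.2]{edidin2023generic} to replace the power spectrum map \eqref{eq:powerspectrum} with the second-moment map for the action of the dihedral group $D_{2N}$ on $\R^N$ by circular shift and reflection; this is the key translation step that puts us in the setting of Section~\ref{sec:signal_recovery}. Once this is done, the problem of recovering $x$ from its power spectrum is literally the problem of recovering $x$ from the second moment of $g \cdot x$, $g \sim \mathrm{Unif}(D_{2N})$, so Corollary~\ref{cor:signal_recovery} applies verbatim — all that remains is to compute $K = \dim V - k$ for this representation.

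The second step is the computation of $k$, the generic orbit dimension of $H = \prod_i \O(V_i)$. Here I would use the irreducible decomposition of $\R^N$ as a $D_{2N}$-representation recorded in the excerpt: for $N$ even there are two one-dimensional summands ($V_0$ and $V_{N/2}$) and $N/2 - 1$ distinct two-dimensional summands, while for $N$ odd there is one one-dimensional summand ($V_0$) and $(N-1)/2$ distinct two-dimensional summands. Since every irreducible appears with multiplicity one, formula \eqref{eq:k_signal_recovery} gives $k_\ell = \dim \O(N_\ell) - \dim \O(N_\ell - 1)$, which is $0$ when $N_\ell = 1$ and $1$ when $N_\ell = 2$. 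Summing, $k = N/2 - 1$ for $N$ even and $k = (N-1)/2$ for $N$ odd, hence $K = N - k$ equals $N/2 + 1$ ($N$ even) and $(N+1)/2$ ($N$ odd), exactly as stated before the corollary. I would also note that because all summands have multiplicity one, the orbits of $H$ are connected precisely when every $\dim V_\ell > 1$; since $V_0$ (and $V_{N/2}$ when $N$ is even) is one-dimensional, the orbits are \emph{not} connected, so we are in the general — not the improved — case of Theorem~\ref{thm:orthogonal}. This is why the $\OV$ bounds carry the additive constants $4$ and not $2$.

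The final step is bookkeeping: substitute $K = \lceil (N+1)/2 \rceil$ (equivalently the two cases above) into the four inequalities of Corollary~\ref{cor:signal_recovery}. For the $\GLV$/$\AffV$ parts, $K > M$ becomes $N/2 + 1 > M$ (N even) or $(N+1)/2 > M$ (N odd), and in both cases this is implied by $N \geq 2M$ — indeed one checks $N \geq 2M \Rightarrow K \geq M+1 > M$; similarly $K > 2M$ follows from $N \geq 4M$. For the $\OV$ parts, $K > M+2$ reduces to $N \geq 2M + 4$ and $K > 2M + 2$ to $N \geq 4M + 4$, again by the same elementary case check on parity. Assembling these four implications yields precisely the stated conclusions, with the ``up to a sign'' qualifier inherited directly from the corresponding clauses of Corollary~\ref{cor:signal_recovery}.

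I do not expect a genuine obstacle here; the only place requiring care is the parity argument showing that the clean hypotheses $N \geq 2M$, $N \geq 4M$, $N \geq 2M+4$, $N \geq 4M+4$ uniformly imply the strict inequalities on $K$ in both the even and odd cases, and keeping straight that the non-connectedness of the orbits forces the weaker orthogonal bounds with the $+4$ rather than $+2$.
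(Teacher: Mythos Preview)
Your proposal is correct and follows essentially the same route as the paper: reduce the power spectrum to the $D_{2N}$ second moment via the cited result, decompose $\R^N$ into irreducibles, compute $K$, and apply Corollary~\ref{cor:signal_recovery}. Your write-up is in fact more detailed than the paper's, which simply records $K = N/2+1$ (even) or $(N+1)/2$ (odd) and invokes the corollary without spelling out the parity bookkeeping or the non-connectedness remark.
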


\paragraph{Previous work.} 
We note that the bounds on $\dim \M$ obtained using Corollary~\ref{cor:signal_recovery} 
for $\GLV$-generic semi-algebraic subsets were previously obtained using a different argument 
in~\cite{bendory2023phase}. In that paper, the authors also give
the slightly better bounds on $\dim M$ for $\OV$-generic semi-algebraic sets
of $N \geq 2M+2$ (for generic signal recovery) and $N \geq 4M+2$
for recovering all signals, using much more elaborate arguments, which take advantage of the fact that each irreducible representation appears with multiplicity one.

In~\cite{edidin2023generic}, the authors considered phase retrieval for sparse vectors in 
generic bases. There, \rev{by taking advantage of the fact that the set of sparse vectors is the union of linear subspaces and} using algebro-geometric arguments, 
the bounds $N \geq 2M-1$ (generic signal recovery) and $N \geq 4M-3$ (all signal recovery) are obtained.

\section{Cryo-EM} \label{sec:cryoEM}
The computational problem of recovering the 3-D molecular structure from cryo-EM measurements can be formulated as estimating a 3-D function $x$ from $n$ observations of the form 
\begin{equation} \label{eq:cryoEM}
    y_i = T(R_{\omega_i}x) + \varepsilon_i,
\end{equation}
where $T$ is a tomographic projection $Tx(z_1,z_2)=\int_{\R}x(z_1,z_2,z_3)dz_3$, $\varepsilon_i$ is a ``noise" term drawn from an i.i.d.\ normal distribution with zero mean and variance $\sigma^2$, and $R_\omega$ rotates the function $x$, where $\omega$ is drawn from  uniform distribution over $\SO(3).$ Importantly, the goal of the cryo-EM inverse problem is only to estimate the unknown function $x$ from $y_1,\ldots,y_n$, while the unknown 3-D rotations $\omega_1,\ldots,\omega_n$ are treated as nuisance variables.

One of the main challenges in processing cryo-EM data sets and analyzing their performance is the extremely high noise level. This is due to the limited number of electrons the microscope can transmit without damaging the biological sample. From an information-theoretic perspective, it was shown that in the high-noise regime $\sigma\to\infty$, the minimal number of samples required for accurate recovery, regardless of any specific algorithm, needs to scale as $n/\sigma^{2d}\to\infty$, where $d$ is the lowest order moment of the observations that determine the sought structure uniquely~\cite{abbe2018estimation}\footnote{This result is true when the dimension of the signal is finite, as we assume in this paper. In the very high-dimensional regime, the sample complexity is governed by other factors~\cite{romanov2021multi}.}.
In~\cite{bandeira2023estimation}, the authors gave computational evidence that 
the third-order moment determines generic signals 
while the second moment does not, implying that the minimal number of observations $n$ for generic signals scales rapidly with the noise level as $\sigma^6$. 
This motivates identifying classes of signals that can be determined from the second moment, and thus with fewer samples. 
For example, recent works showed that if the structure can be represented with only a few coefficients under some basis~\cite{bendory2022sample}, or as a sparse mixture of Gaussians~\cite{bendory2023autocorrelation}, then the signal can be recovered from the second moment.
In this work, we consider the more general family of signals that lie in a semi-algebraic set (including sparse signals as in~\cite{bendory2022sample}) and 
derive conditions under which a molecular structure can be recovered from the second moment, implying that only $n/\sigma^4 \to \infty$ samples are necessary for accurate estimation in the high noise regime.

\subsection{The second moment of cryo-EM} 
Using spherical coordinates $(r, \theta, \phi)$, 
it is typical to model a molecular structure 
$x\in L^2(\R^3)$
by discretizing $x(r, \theta, \phi)$ 
with $R$ samples $r_1, \ldots , r_{R}$, of the radial coordinates
and bandlimiting the corresponding spherical functions $x(r_i, \theta, \phi)$.
This is a standard assumption in the cryo-EM literature, see for example,~\cite {bandeira2020non}.
Mathematically, this means that we approximate the infinite-dimensional representation $L^2(\R^3)$ with
the finite-dimensional representation
$V = \oplus_{\ell =0}^L V_\ell^{\oplus R}$, where
$L$ is the bandlimit, and $V_{\ell}$ is the $(2\ell +1)$-dimensional
irreducible representation of $\SO(3)$, corresponding to harmonic
polynomials of frequency $\ell$.\footnote{For convenience, we consider a constant number of samples at the radial direction, namely, $R_\ell=R$ for all $\ell$, but omitting this assumption does not alter the analysis.}
In this model, an element $x \in V$ can be identified as an $R$-tuple
$x = (x[1], \ldots , x[R])$, where 
\begin{equation} \label{eq.function}
	x(\theta, \varphi)[r] = \sum_{ \ell=0}^L\sum_{m=-\ell}^\ell X_{\ell}^m[r]
	Y_{\ell}^m(\theta, \varphi),
\end{equation}
and $Y_\ell^m(\theta, \varphi)$ are the spherical harmonics basis functions. 
Therefore, the problem of determining a structure reduces to determining the unknown coefficients $X_\ell^m[r]$ in~\eqref{eq.function}~\cite{bendory2020single,bandeira2020non}. It was shown that the second moment of the observations averaged over $\SO(3)$, is given by the Gram matrices~\cite{kam1980reconstruction,bendory2022sample} 
\begin{equation} \label{eq:Bl}
	B_\ell = X_\ell^* X_\ell, \quad  \ell = 0, \ldots L,
\end{equation}
where $X_{\ell} = \left(X_{\ell}^m[r_i]\right)_{m=-\ell, \ldots , \ell , i = 1, \ldots R}$, contains the spherical harmonic coefficients of $x$\footnote{\rev{We note that when the distribution over $\SO(3)$ is non-uniform, the second moment takes a different form that depends explicitly on the distribution; see~\cite{sharon2020method}. Specifically, in this case, the second moment is given by $\int_\omega \rho(\omega)(T(R_\omega \cdot x)) \, (T(R_\omega \cdot x))^* \, d\omega,$ which defines a rich family of invariant functions of total degree three on $R(G) \times V \times V$, where $R(G)$ denotes the regular representation of $G$. The analysis presented in this paper does not address this more general setting.}}.
 Therefore, the second 
	moment
	determines the coefficient matrices~$X_\ell, \, \ell=0,\ldots,L$ up to the action of the ambiguity group
	$\prod_{\ell =0}^{L} U(2\ell +1)$. 
	If we consider functions that are the Fourier
	transforms of real-valued functions on $\R^3$ (which is the case in cryo-EM), then the ambiguity group is $\prod_{\ell =0}^{L} O(2\ell +1)$.

\subsection{Main result for cryo-EM}  

The dimension of the representation $V$ is 
$N=R \sum_{\ell =0}^L (2 \ell +1) =R(L+1)^2.$ 
If we assume that $R \geq 2L+1$, then the orbits of the
action of $H = \prod_{\ell =0}^L O(2\ell +1)$ have full dimension,
so $k(H) =\sum_{\ell=0}^L\binom{2\ell +1}{2} = 
\frac{L(L+1)(4L+5)}{6}\approx\frac{2L^3}{3}.$ 

\begin{theorem} \label{thm:cryoEM}
Consider the cryo-EM model described above, 
where the signal is taken from the representation 
 $V = \oplus_{\ell =0}^{L} V_\ell^{\oplus R}$ of $\SO(3)$ with $R \geq 2L+1$. Let $\M$ be a semi-algebraic subset of dimension $M$ and 
let $$K= \dim V - \dim \prod_{\ell =0}^L \O(2\ell + 1) = 
(L+1)\left(R (L+1) - {L(4L+5)\over{6}}\right)\approx L^2\left(R+\frac{2L}{3}\right).$$
\begin{enumerate}
	\item If $K>M$, then a generic $x$ in a $\GLV$ or $\AffV$-generic semi-algebraic set $\M$ of dimension 
 $M$ is determined (up to a sign for $\GLV$) by its second moment.
Likewise,  if $K>2M$, then every $x$ in a 
 generic $\AffV$ or $\GLV$-semi-algebraic set $\M$ of dimension 
 $M$ is determined by its second moment (up to a sign for $\GLV$).

	\item If  $K>M+2$, then a generic $x$ in a
 $\OV$-generic semi-algebraic set $\M$ of dimension 
 $M$ is determined by its second moment, up to a sign.
Likewise,  $K>2M+2$ then every $x$ in a 
 generic $\OV$-semi-\rev{algebraic} set $\M$ of dimension 
 $M$ is determined by its second moment, up to a sign.

\end{enumerate}
   \end{theorem}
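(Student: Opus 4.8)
The plan is to obtain Theorem~\ref{thm:cryoEM} as a direct specialization of Corollary~\ref{cor:signal_recovery} to the representation $V = \oplus_{\ell=0}^L V_\ell^{\oplus R}$ of $G = \SO(3)$, where $V_\ell$ is the $(2\ell+1)$-dimensional irreducible representation. The first step is to invoke the discussion of Section~\ref{sec:cryoEM}: by~\eqref{eq:Bl} the $\SO(3)$-averaged second moment of the observations~\eqref{eq:cryoEM} is equivalent to the tuple of Gram matrices $(X_0^* X_0, \ldots, X_L^* X_L)$, and hence determines $x$ precisely up to the diagonal action of $H = \prod_{\ell=0}^L \O(2\ell+1)$ on the isotypic components $V_\ell^{\oplus R}$ (the ambiguity group is $\O$ rather than $\U$ because in cryo-EM one works with Fourier transforms of real-valued functions). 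Since $V$ is a continuous finite-dimensional, hence orthogonal, representation, the hypotheses of Corollary~\ref{cor:signal_recovery} are satisfied, so it only remains to evaluate the effective dimension $K = \dim V - k$ appearing there.

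The second step is this computation. We have $\dim V = R\sum_{\ell=0}^L (2\ell+1) = R(L+1)^2$. By~\eqref{eq:k_signal_recovery}, $k = \sum_{\ell=0}^L k_\ell$ with $k_\ell = \dim\O(2\ell+1) - \dim\O(2\ell+1-R)$, the second term being $0$ whenever $2\ell+1-R \le 0$. The hypothesis $R \ge 2L+1$ forces $2\ell+1 \le 2L+1 \le R$ for every $\ell \le L$, so the second term always vanishes and $k_\ell = \dim\O(2\ell+1) = \binom{2\ell+1}{2} = \ell(2\ell+1)$. Summing the arithmetic series gives $k = \sum_{\ell=0}^L \ell(2\ell+1) = \tfrac{L(L+1)(4L+5)}{6}$, and therefore
\[
K = R(L+1)^2 - \frac{L(L+1)(4L+5)}{6} = (L+1)\left(R(L+1) - \frac{L(4L+5)}{6}\right),
\]
which is the value in the statement. (Equivalently, $R \ge 2L+1$ is exactly the condition that the diagonal action of each $\O(2\ell+1)$ be generically free, i.e.\ that the generic $H$-orbit have full dimension.)

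The third step is to feed these quantities into Corollary~\ref{cor:signal_recovery}. Because the summand $V_0$ is one-dimensional, the orbits of $H$ are not connected (by the criterion stated just before Corollary~\ref{cor:signal_recovery}), so we use the bounds for the non-connected case: $K>M$, $K>2M$ for the $\GLV$/$\AffV$ conclusions and $K>M+2$, $K>2M+2$ for the $\OV$ conclusions. Finally one unwinds the transversality conclusion into a recovery statement exactly as in Corollary~\ref{cor:signal_recovery}: if $x, x' \in \M$ have the same second moment then $x' = h\cdot x$ for some $h \in H$, so $h\cdot x = x' \in \M$, and transversality forces $x' = \pm x$ (for $\GLV$) or $x' = x$ (for $\AffV$); this yields the four asserted statements.

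I do not expect any genuinely new obstacle beyond Corollary~\ref{cor:signal_recovery}: the only points requiring care are (i) the precise role of $R \ge 2L+1$ in making each $k_\ell$ equal the full $\dim\O(2\ell+1)$ --- for $R < 2L+1$ the formula for $K$ changes, with correspondingly weaker recovery bounds --- and (ii) the bookkeeping observation that $-x$ and $x$ always produce the same second moment, which is why the linear ($\GLV$) version can only recover $x$ up to sign whereas the affine ($\AffV$) version recovers it outright.
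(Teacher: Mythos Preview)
Your proposal is correct and matches the paper's approach: the paper also presents Theorem~\ref{thm:cryoEM} as a direct specialization of Corollary~\ref{cor:signal_recovery}, computing $\dim V = R(L+1)^2$ and, under $R\ge 2L+1$, $k(H)=\sum_{\ell=0}^L\binom{2\ell+1}{2}=\frac{L(L+1)(4L+5)}{6}$ immediately before the theorem statement, with no further argument. Your additional remark that the presence of the one-dimensional summand $V_0$ forces the non-connected-orbit bounds is a nice explicit justification that the paper leaves implicit.
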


\begin{corollary}[The sample complexity of cryo-EM under semi-algebraic priors] \label{cor:cryo}
    Consider the cryo-EM model~\eqref{eq:cryoEM} in the high noise regime $\sigma\to\infty$. If the 3-D molecular structure~$x$ lies in a low-dimensional semi-algebraic set that satisfies the conditions of Theorem~\ref{thm:cryoEM}, then it can be determined by only $n/\sigma^4\to\infty$ samples. 
\end{corollary}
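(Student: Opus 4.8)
The plan is to read Corollary~\ref{cor:cryo} as the juxtaposition of two facts: the algebraic identifiability statement of Theorem~\ref{thm:cryoEM}, and the information-theoretic scaling law for the method of moments from~\cite{abbe2018estimation}. First I would record that, under the hypotheses of Theorem~\ref{thm:cryoEM}, the structure $x$ is the unique point of the semi-algebraic prior $\M$ lying in its orbit under the ambiguity group $H=\prod_{\ell=0}^{L}\O(2\ell+1)$. Since (after noise debiasing, see below) the second moment of the cryo-EM observations~\eqref{eq:cryoEM} is exactly the tuple of Gram matrices $B_\ell=X_\ell^{*}X_\ell$ of~\eqref{eq:Bl}, and the Gram matrices determine $x$ up to $H$, the prior $x\in\M$ pins $x$ down completely. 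Hence the second moment determines $x$, i.e.\ the lowest-order moment needed is $d\le 2$; in fact $d=2$, since the first moment of $y_i$ under the uniform distribution on $\SO(3)$ retains only the $\ell=0$ radial profile and so cannot determine $x$ when $L\ge 1$.

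Next I would invoke~\cite{abbe2018estimation}: for a finite-dimensional signal that is identifiable from the moments of the observations of order $\le d$ (and not from lower ones), the minimal sample size for consistent recovery in the regime $\sigma\to\infty$ obeys $n/\sigma^{2d}\to\infty$, the scaling reflecting that the empirical $d$-th moment concentrates around its mean at rate $O(\sigma^{d}/\sqrt{n})$. Plugging in $d=2$ immediately yields that $n/\sigma^{4}\to\infty$ samples suffice, which is the assertion of the corollary.

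The one step requiring care is the reduction from the noisy measurements to the ``clean'' second moment on which Kam's formula rests. Because each $\varepsilon_i$ is mean-zero Gaussian with known covariance $\sigma^{2}\Id$ and independent of the rotation $\omega_i$, we have $\E[y_i y_i^{\top}]=\E_{\omega}\!\big[T(R_\omega x)\,T(R_\omega x)^{\top}\big]+\sigma^{2}\Id$; the bias term is known and can be subtracted, so the debiased empirical second moment is an unbiased, $O(\sigma^{2}/\sqrt{n})$-consistent estimator of the quantity feeding Theorem~\ref{thm:cryoEM}. I do not expect a genuine obstacle here: the mathematical heavy lifting is already inside Theorem~\ref{thm:cryoEM}, and what remains is bookkeeping together with the cited scaling law. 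If one wishes to be fully careful, the point to flag is that~\cite{abbe2018estimation} is phrased for smooth (indeed compact) parameter spaces, so one should restrict to a smooth stratum of $\M$ carrying its generic point---which exists because any semi-algebraic set is a finite union of manifolds---and apply the scaling law there.
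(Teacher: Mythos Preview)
Your proposal is correct and follows exactly the approach the paper intends: the paper gives no explicit proof of Corollary~\ref{cor:cryo}, treating it as an immediate consequence of Theorem~\ref{thm:cryoEM} (second-moment identifiability under the prior) together with the sample-complexity scaling $n/\sigma^{2d}\to\infty$ from~\cite{abbe2018estimation}, and the same debiasing step you describe is spelled out verbatim in the parallel discussion preceding Corollary~\ref{cor:mra_sample_complexity}. If anything, your write-up is more careful than the paper's own treatment, in that you explicitly note $d=2$ rather than $d=1$ and flag the smooth-stratum caveat for applying~\cite{abbe2018estimation}.
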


\rev{We note that the sample complexity results in the MRA literature are asymptotic in nature, focusing primarily on the rate at which the number of observations must grow (in this case, faster than $\sigma^4$). Nevertheless, there is substantial numerical evidence, dating back more than 25 years, that these asymptotic predictions accurately reflect the model’s behavior in the finite-sample regime; see, for example,~\cite{sigworth1998maximum,bendory2017bispectrum,abbe2018estimation}.}

\paragraph{Dimension counting.}
It is instructive to compare the bounds on the dimension of the semi-algebraic set $M$ and the dimension of the molecular structure (the representation) $N$. Assuming $R\approx 2L$, we have
\begin{equation}
    \frac{K}{N}=\frac{R(L+1)^2 - (L+1){(L+1)L(4L+5)\over{6}}}{R(L+1)^2}\approx\frac{2}{3}.
\end{equation}
This implies that for $\frac{M}{N} \approx \frac{2}{3}$ we can recover
a generic signal, and for $\frac{M}{N} \approx \frac{1}{3}$
our theorem implies that we can recover {\em every} signal
in a generic semi-algebraic set.
Therefore, our results are optimal, possibly up to a constant.

\paragraph{Sparse priors for cryo-EM.}
In \cite{bendory2022sample}, the authors studied the cryo-EM model for signals having a sparse expansion with respect to a generic orthonormal basis. Specifically, the authors proved that if
$\V$ is a generic orthonormal basis, and the expansion
of $x\in\RR^N$ with respect to this basis is $M$-sparse (meaning at most $M$ basis coefficients are non-zero) with $\frac{M}{N}\approx\frac{1}{3}$, then a generic $x$ is determined (up to a sign) from its second moment. 
The set of $M$-sparse vectors with respect to a fixed orthonormal basis is a union of $\binom{N}{M}$ $M$-dimensional linear subspaces, and
the set of $M$-sparse vectors with respect to a generic orthonormal basis $\V$ is an example of an $M$-dimensional $\OV$-generic semi-algebraic set. Notably, the bounds from Theorem~\ref{thm:cryoEM}  substantially beat the bound obtained in~\cite{bendory2022sample}.
Indeed, for $\frac{M}{N} \approx \frac{2}{3}$ we can recover
a generic $M$-sparse signal, and for $\frac{M}{N} \approx \frac{1}{3}$
our theorem implies that we can recover {\em every} $M$-sparse signal
in a generic orthonormal basis.

\paragraph{Potential implications to conformational variability analysis in cryo-EM.}
One of the prime motivations behind the cryo-EM technology is its potential to elucidate multiple conformations (shapes) of biological molecules~\cite{bendory2020single}. 
Different conformations are associated with different functions of the molecule. 
Recovering the different conformations, referred to as the \emph{heterogeneity problem}, is extremely important, giving rise to the development of a wide variety of approaches and techniques. 
The heterogeneity problem is typically classified into two categories: discrete heterogeneity and continuous heterogeneity. In the former, the molecule may appear in several distinct and stable conformations. In the latter and more challenging setup, molecules may present a continuum of conformations, thereby exacerbating the computational challenge~\cite{toader2023methods}.

Our results state that if the 3-D structure lies in a low-dimensional semi-algebraic set, then it can be recovered from fewer observations (Corollary~\ref{cor:cryo}). This might be crucial for the discrete heterogeneity problem, where the total number of measurements is split among the different conformations. \rev{The situation becomes more complicated in the context of continuous heterogeneity}.  This problem is clearly ill-posed without prior on the distribution of conformations, since each tomographic projection (measurement) might be associated with a different conformation.
Our results suggest that semi-algebraic priors on the  \emph{continuous space of conformations} can render the continuous heterogeneity problem well-posed.
Indeed, among the successful techniques proposed in recent years, two outstanding approaches are based on either assuming that the conformations lie on a linear subspace~\cite{anden2018structural,gilles2023bayesian} or can be represented using a deep generative model~\cite{zhong2021cryodrgn,chen2021deep}; both of these are semi-algebraic priors.  
It is important to note, however, that since the prior in the continuous heterogeneity case is not only on the 3-D structure but also on the distribution of conformations, the situation is more subtle than what we consider here.

\subsection{Multi-reference alignment} \label{sec:mra}
The multi-reference alignment model entails estimating a signal $x$ in a representation~$V$ of a compact group~$G$ from $n$ observations of the form
\begin{equation} \label{eq:mra}
    y_i = g_i\cdot x + \varepsilon_i, 
\end{equation}
where $g_1,\ldots,g_n\in G$ are random elements drawn from a uniform distribution over the group~$G$, and $\varepsilon_i\sim\N(0,\sigma^2 I)$ i.i.d. \rev{is the noise.} 
The goal is to estimate $x$ from $y_1,\ldots,y_n$, while the group elements are treated as nuisance variables. 
Evidently, the observations are just noisy realizations of the signal recovery problem~\eqref{eq:signal_recovery}.

The multi-reference alignment model was first suggested as an abstraction of the cryo-EM model, where the tomographic projection is ignored~\cite{bandeira2020non}. Yet, in recent years, this problem has been studied in more generality as a prototype of statistical models with intrinsic algebraic structures, e.g.,~\cite{bendory2017bispectrum,bandeira2023estimation,perry2019sample}.  
As in the cryo-EM case, the sample complexity of the multi-reference alignment model is determined by the lowest order moment that identifies the signal~\cite{abbe2018estimation}, and it was shown that in many cases, the signal is determined only by the third moment~\cite{bandeira2023estimation}. 
Thus, $n/\sigma^6\to\infty$ is a necessary condition for recovery. 
Similar to the cryo-EM problem, this pessimistic result led to a series of papers analyzing when a sparse signal can be recovered from the second moment (implying an improved sample complexity)~\cite{bendory2022sparse,ghosh2022sparse,ghosh2023minimax,bendory2022sample}. This paper generalizes these results as follows. 
Assume we acquire  $n/\sigma^4\to\infty$ samples from~\eqref{eq:mra}. 
In this case,
the empirical second moment of the observations approximates (almost surely) $\E [yy^T]$,  up to a constant term that can be removed assuming the noise level~$\sigma^2$ is known.
Thus, a direct implication of Corollary~\ref{cor:signal_recovery} is the following result on the sample complexity of multi-reference alignment with a generic semi-algebraic prior.

\begin{corollary}[The sample complexity of multi-reference alignment] \label{cor:mra_sample_complexity} 
Consider the multi-reference alignment model~$\eqref{eq:mra}$ and let $\sigma\to\infty$. 
Suppose that $x$ lies in a semi-algebraic set that satisfies the conditions of Corollary~\ref{cor:signal_recovery}. Then, the minimal number of observations required for accurate recovery of $x$, regardless of any specific algorithm, is $n/\sigma^4\to\infty$.
\end{corollary}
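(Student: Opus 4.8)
The statement has two halves — that $n/\sigma^{4}\to\infty$ samples suffice for recovery, and that no estimator can succeed with asymptotically fewer — and the plan is to treat them in turn. For achievability, I would build the estimator from the debiased empirical second moment $\widehat{M}_2 = \frac{1}{n}\sum_{i=1}^{n} y_i y_i^{T} - \sigma^{2} I$, which is available because $\sigma^{2}$ is assumed known. Expanding $y_i y_i^{T}=(g_i\cdot x)(g_i\cdot x)^{T}+(g_i\cdot x)\varepsilon_i^{T}+\varepsilon_i(g_i\cdot x)^{T}+\varepsilon_i\varepsilon_i^{T}$ and using $\E[\varepsilon_i\varepsilon_i^{T}]=\sigma^{2}I$ shows $\E[\widehat{M}_2]=\E_{g\sim\mathrm{Unif}(G)}[(g\cdot x)(g\cdot x)^{T}]$, which is exactly the second moment treated in Corollary~\ref{cor:signal_recovery}. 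A routine entrywise variance bound gives fluctuations of order $O_P((\|x\|^{2}+\sigma\|x\|+\sigma^{2})/\sqrt{n})$, which in the regime $\sigma\to\infty$ is $O_P(\sigma^{2}/\sqrt{n})$ and vanishes precisely when $n/\sigma^{4}\to\infty$; hence in that regime $\widehat{M}_2$ converges to the true second moment (almost surely, as in the discussion preceding the statement).

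It then remains to upgrade convergence of second moments to consistent recovery of $x$. Write $\Phi(x)=\E_{g}[(g\cdot x)(g\cdot x)^{T}]$. By Corollary~\ref{cor:signal_recovery}, under its hypotheses $\Phi$ is injective on the generic semi-algebraic prior $\M$ (up to the sign ambiguity recorded there), and it is proper because $G$ acts orthogonally, so that $\trace\Phi(x)=\E_{g}\|g\cdot x\|^{2}=\|x\|^{2}$ and bounded sets of second moments pull back to bounded subsets of $\M$. A proper continuous injection between subsets of Euclidean space is a homeomorphism onto its image; consequently the estimator $\widehat{x}$ defined as any point of $\overline{\M}$ whose second moment is closest to $\widehat{M}_2$ satisfies $\widehat{x}\to x$ (up to sign) whenever $n/\sigma^{4}\to\infty$. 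This settles the achievability half.

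For the converse I would invoke the information-theoretic lower bound of \cite{abbe2018estimation}: in the high-noise regime the required sample size must scale as $n/\sigma^{2d}\to\infty$, where $d$ is the smallest moment order that determines the signal within the prior class. Corollary~\ref{cor:signal_recovery} gives $d\le 2$, while $d\ge 2$ because the first moment $\E[y]=\E_{g}[g\cdot x]$ is the fixed orthogonal projection of $x$ onto the subspace of $G$-invariant vectors, whose dimension is far below $M=\dim\M$ in all the representations of interest; a generic translate of $\M$ therefore contains distinct points with the same first moment, so moments of order one cannot identify $x$. Hence $d=2$ and the lower bound reads $n/\sigma^{4}\to\infty$, matching achievability.

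I expect the genuine obstacle to be the upgrade step of the second paragraph: since a semi-algebraic prior need not be compact, passing from convergence of the $G$-invariant $\Phi(x)$ to convergence of the estimate relies on the properness identity $\trace\Phi(x)=\|x\|^{2}$ together with the injectivity supplied by Corollary~\ref{cor:signal_recovery} (and, for the $\GLV$ and $\OV$ cases, on carrying the harmless sign ambiguity through the argument). The concentration estimate and the appeal to \cite{abbe2018estimation} are by comparison routine.
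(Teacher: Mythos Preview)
Your proposal is correct and follows essentially the same route as the paper: the paper's justification is simply the short paragraph preceding the corollary, which (i) observes that with $n/\sigma^4\to\infty$ the debiased empirical second moment converges to $\E[yy^T]$, (ii) invokes Corollary~\ref{cor:signal_recovery} to conclude that the second moment determines $x$, and (iii) cites~\cite{abbe2018estimation} for the translation between ``lowest determining moment'' and sample complexity. You carry out exactly this plan, but with considerably more care---in particular, the properness argument via $\trace\Phi(x)=\|x\|^2$ to pass from moment convergence to consistent recovery, and the explicit discussion of why $d\geq 2$, are details the paper leaves implicit. The only soft spot is your $d\geq 2$ step, which relies on the invariant subspace $V^G$ having dimension below $M$; this is true in the intended applications but is not a consequence of the hypotheses as stated, and the paper does not address it either.
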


We note that the dimension bound on $\M$ for which the improved sample complexity result holds is tighter and more general than the bound given in~\cite{bendory2022sample}.

\section{Permutation invariant separators for machine learning on sets}
\label{sec:permutation}

In this section, we consider the setting where $V=\RR^{d\times n}$ is a representation of the permutation group $S_n$, with respect to the action of applying the same permutation to each of the $d$ rows of a matrix $X\in \RR^{d\times n}$.
This problem is motivated by machine learning applications for objects, like sets \cite{deepsets,qi2017pointnet} and graphs \cite{xu2018how}, which do not come with a natural ordering. 
In recent years, a line of work aims to design permutation invariant ``separators'' for $\RR^{d\times n}$, or a subset $\M \subset \RR^{d\times n}$~\cite{cahill2022groupinvariant,tabaghi2023universal,amir2023neural}.   Namely, a permutation invariant mapping $F:\RR^{d\times n} \to \RR^c$ is separating on $\M$, if for all $X,Y\in \M$ we have that $F(X)=F(Y)$ if and only if $X$ and $Y$ are related by an $S_n$ transformation. For efficiency purposes, it is desirable that the number~$c$, the dimension of \rev{the codomain of $F$}, is as small as possible. Moreover, to be useful for gradient descent-type learning, the function $F$ is typically required to be differentiable, or at least continuous everywhere and piecewise differentiable. 

When $d=1$, one possible solution is choosing $F_1(X)=\sort(X)$, which sorts the elements of the vector $X$ from smallest to largest. This function is a permutation invariant separator and is continuous and piecewise linear.

When $d>1$, there are two common generalizations for the $d=1$ sorting operation. One is lexicographical sorting, where the ordering of $X\in \RR^{d\times n}$ is determined by sorting the first row from small to large, in the case of ties, sorting according to the second row, etc. Lexicographical sorting is a permutation invariant separator, but is not continuous. The second generalization is \rev{sorting each row independently, which we denote by}  $F_d(X)=\rowsort(X) $. This is permutation invariant and continuous, but is not separating. This can be easily seen by noting that 
$F_d(X)=F_d(Y) $ if and only if $Y$ is obtained from $X$ by applying  $d$ different permutations  independently to the $d$ different rows of $X\in \RR^{d\times n}$. Thus, the ``ambiguity group'' of $F_d$ is not $S_n$ but the larger group $(S_n)^d $.
While $F_d$ is not separating on all of $\RR^{d\times n}$, the situation may be different when considering a semialgebraic subset $\M \subseteq \RR^{d\times n}$, since in this case the orbit of a given $X\in \M$ under the $(S_n)^d$ symmetry may not have any intersections with $\M$ other than $X$ itself.  Indeed, we can apply the second part of Theorem \ref{thm:affine}, using the fact that the orbit of the ambiguity group is finite and hence zero-dimensional,  to obtain the following result. 

\begin{corollary}\label{cor:seperators}
Let $\M\subseteq \RR^{d\times n}$ be an $\AffV$-generic semialgebraic set with $\dim(\M)<\frac{1}{2}nd$. Then, the permutation invariant function $F_d(X)=\rowsort(X)$ is 
separating on all of $\M$; i.e., for all $X \in \M$, if
$F(X)=F(Y)$ for some $Y \in \M$ then $X = Y$.
\end{corollary}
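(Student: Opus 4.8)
The plan is to obtain Corollary~\ref{cor:seperators} as a direct specialization of the second part of Theorem~\ref{thm:affine}. The first step is to set up the representation-theoretic picture: let $V = \RR^{d\times n}$ with $H = (S_n)^d$ acting by applying an independent permutation to each of the $d$ rows. This is an orthogonal representation of a compact (in fact finite) Lie group, so Theorem~\ref{thm:affine} applies. Because $H$ is finite, every orbit $Hx$ is zero-dimensional, hence $k(H)=0$ and the effective dimension is $K = \dim V - k(H) = nd$. The hypothesis $\dim(\M) < \tfrac12 nd$ is exactly the condition $K > 2M$ of part~2 of Theorem~\ref{thm:affine} with $M = \dim \M$.

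The second step is to translate the conclusion of Theorem~\ref{thm:affine}(2) into the language of separators. Applying that theorem to an $\AffV$-generic semi-algebraic set $\M$ of dimension $M$ with $K > 2M$, we get: for \emph{all} $x \in \M$, if $h\cdot x \in \M$ for some $h \in H$, then $h \cdot x = x$ (the $\AffV$ case gives equality, not just equality up to sign). Now I would observe that $F_d = \rowsort$ is, by its definition, constant exactly on $H$-orbits and separates distinct $H$-orbits: indeed $\rowsort(X) = \rowsort(Y)$ iff each row of $Y$ is a permutation of the corresponding row of $X$, i.e. iff $Y = h \cdot X$ for some $h \in (S_n)^d = H$. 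Hence if $X, Y \in \M$ with $F_d(X) = F_d(Y)$, then $Y = h\cdot X$ for some $h \in H$, so $Y = h \cdot X \in \M$, and Theorem~\ref{thm:affine}(2) forces $h \cdot X = X$, i.e. $X = Y$. This is precisely the claimed separation property on all of $\M$.

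There is essentially no hard technical obstacle here; the content is entirely in Theorem~\ref{thm:affine}, and the work is bookkeeping. The one point that warrants care — and which I would state explicitly — is the computation $k(H) = 0$ and therefore $K = nd$: this relies only on the elementary fact that a finite group has zero-dimensional orbits, which the paper has already noted in the discussion of ``the dimension of orbits.'' A second minor point to verify cleanly is that $\rowsort$ genuinely has ambiguity group exactly $(S_n)^d$ and no larger — i.e. that the coincidence $F_d(X) = F_d(Y)$ cannot hold for $Y$ outside the $(S_n)^d$-orbit of $X$ — but this is immediate from the componentwise description of $\rowsort$, as the paper already remarks in the text preceding the corollary. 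Finally, I would note for completeness that since $\AffV$-genericity already suffices, one does not need the (weaker) $\GLV$ version, and there is no sign ambiguity to discard; the statement is clean equality $X = Y$.
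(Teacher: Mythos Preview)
Your proposal is correct and matches the paper's approach exactly: the paper derives the corollary by applying part~2 of Theorem~\ref{thm:affine} with $H=(S_n)^d$, using that the ambiguity group is finite (hence orbits are zero-dimensional, so $K=nd$) and that $\rowsort$ separates precisely $(S_n)^d$-orbits. Your write-up simply spells out these steps in more detail than the paper does.
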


This result can be compared with the results of~\cite{dym2022low}, which consider the same group representation with different generic translates. There, the semi-algebraic set $\M$ resides in some space $\RR^{d'\times n}$ (where possibly $d'\neq d $) and the generic translates are multiplications from the left by matrices $B \in \RR^{d\times d'}$. In particular, the authors of~\cite{dym2022low} show that  $\rowsort$ will be invariant and separating on 
$$B \M:=\{BX\,|\, X \in \M\}, $$
if $\dim(\M)<\frac{d}{2} $. 
Thus, the ratio between the dimension $nd$ of the ambient space and the subset $\M$  is $\approx 2n $ in this result, whereas only $\approx 2 $ in Corollary~\ref{cor:seperators}. On the other hand, we note that translates by left matrix multiplication are, in a sense, more natural to this problem as they preserve the permutation structure. In particular, the function $X\mapsto \rowsort(BX) $ is permutation invariant,  while the function $X\mapsto \rowsort(A\cdot X) $ is not. (Here, $A \cdot X $ is any affine transformation $A:\RR^{d\times n} \to \RR^{d\times n} $). 

Of course, it is also possible to apply the first part of Theorem \ref{thm:affine} to obtain a bound of $\dim(\M)<nd $ for a generic $X$. This can be compared with similar bounds obtained in \cite[Proposition 3.7]{balan2022permutation}, for generic translates of vector spaces by matrix multiplication.

\section{Proofs of Theorems~\ref{thm:affine} and \ref{thm:orthogonal}} \label{sec:proof}

\subsection{The fiber lemma}
We prove a general lemma, which is a generalization 
of~\cite[Lemma 2.1]{bendory2023phase}  and also resembles~\cite[Theorem 1.7 and Theorem 3.3]{dym2022low}.
 
\rev{Let $V$ be an orthogonal representation of a compact Lie group $H$} and let
$\Theta$ be a semi-algebraic group of automorphisms of $V$.
For each pair $x,y\in V$, set
$$\Theta(x,y)=\{\theta\in \Theta\,|\, \theta \cdot x \in H (\theta\cdot y) \} .$$ 
In words, $\Theta(x,y)$ is the set of automorphisms $\theta \in V$ with the property that the $\theta$-translate $\theta \cdot x $ lies in the $H$-orbit
of the $\theta$-translate $\theta \cdot y$. Since we assume that $\Theta$ is a group of semi-algebraic automorphisms of $V,$ the set $\Theta(x,y)$ is
a semi-algebraic subset of the group~$\Theta$.

We say that $x \sim y $ if 
$$\Theta(x,y)= \Theta. $$
In this case, if $x,y$ are equivalent in this sense then $\theta \cdot x $ and $\theta \cdot y$ will lie in the same $H$-orbit for any $
\theta \in \Theta$. We note that the equivalence is invariant under the action of $\Theta$ $$x \sim y \text{ if and only if } \theta \cdot x \sim \theta \cdot y,\, \forall \theta \in \Theta, \, x,y \in V.$$

\begin{lemma}[The fiber lemma] \label{lem.AA}
	Let $V$ be a representation of a compact group $H$
 and $\mathcal{M}\subseteq V$ be a semi-algebraic subset of $V$ of dimension $M$. Let $\Theta$ be a semi-algebraic group of automorphisms of $V$.
Assume that 
	$$ \dim(\Theta(x,y))\leq \dim(\Theta)-K, \quad  \forall x,y \in \M, \text{ with }x \not \sim y.$$
	Then, for a generic $\theta \in \Theta$, the following hold:
	\begin{enumerate}
		\item If $K > M$,
		then for a generic vector $z \in \theta \cdot \M$  \rev{if there is a vector $w \in \theta \cdot \M$ such 
        that $z = h \cdot w$}
for some $h \in H$, we must have $z \sim w$. 
		\item If $K > 2M$
  then for any vector $z \in \theta \cdot \M$  \rev{if there is a vector $w \in \theta \cdot \M$ such 
        that $z = h \cdot w$} for some $h \in H$, we must have $z \sim w$. 
  	\end{enumerate}
\end{lemma}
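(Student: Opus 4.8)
The plan is to set up an incidence variety over $\M \times \M$ that tracks the pairs of points related by $H$ together with the automorphism $\theta$ that realizes the relation, and then do a dimension count on its two projections. Concretely, I would consider
\[
\Sigma = \{(\theta, x, y) \in \Theta \times \M \times \M \;:\; \theta \in \Theta(x,y),\ x \not\sim y\},
\]
which is a semi-algebraic subset of $\Theta \times \M \times \M$ because $\Theta$ acts by semi-algebraic automorphisms, $H$ is a compact (hence real-algebraic) group, and membership of $\theta\cdot x$ in the orbit $H(\theta\cdot y)$ is a semi-algebraic condition. The hypothesis of the lemma says exactly that each fiber of the projection $\Sigma \to \M \times \M$, $(\theta,x,y)\mapsto(x,y)$, has dimension at most $\dim\Theta - K$; since $\M\times\M$ has dimension at most $2M$, we get $\dim \Sigma \le \dim\Theta - K + 2M$. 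I would also keep track of the subvariety $\Sigma_{\mathrm{diag}}$ where we only remember $x$ (i.e., the image of $\Sigma$ under $(\theta,x,y)\mapsto(\theta,x)$), whose dimension is at most $\dim\Theta - K + M$.

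Next I would project to $\Theta$. Write $\pi\colon \Sigma \to \Theta$, $(\theta,x,y)\mapsto\theta$, and $\pi'\colon \Sigma_{\mathrm{diag}}\to\Theta$. For part (2): if $K > 2M$ then $\dim\Sigma < \dim\Theta$, so the image $\pi(\Sigma)$ is a semi-algebraic subset of $\Theta$ of dimension strictly less than $\dim\Theta$; hence a generic $\theta\in\Theta$ is not in $\pi(\Sigma)$. For such $\theta$, there is no pair $x,y\in\M$ with $x\not\sim y$ and $\theta\in\Theta(x,y)$; equivalently, whenever $z = \theta\cdot x$, $w=\theta\cdot y$ lie in $\theta\cdot\M$ with $z = h\cdot w$, we must have $x\sim y$, which (by the $\Theta$-invariance of $\sim$ noted just before the lemma) is the same as $z\sim w$. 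For part (1): if only $K > M$, then $\dim\Sigma_{\mathrm{diag}} \le \dim\Theta - K + M < \dim\Theta$ is \emph{not} guaranteed — wait, actually $\dim\Sigma_{\mathrm{diag}} \le \dim\Theta - K + M$ and $K > M$ gives $\dim\Sigma_{\mathrm{diag}} < \dim\Theta$, so again $\pi'(\Sigma_{\mathrm{diag}})$ is a proper-dimensional subset. But that is too strong; the right bookkeeping for part (1) is subtler. Instead, for a fixed generic $\theta$ I would look at the fiber $\pi^{-1}(\theta)\subseteq\{\theta\}\times\M\times\M$ and project it to the $x$-coordinate: the set of ``bad'' $z = \theta\cdot x$ is the image of $\pi^{-1}(\theta)$, and a fiber-dimension argument for $\pi$ itself (generic fiber dimension $= \dim\Sigma - \dim\overline{\pi(\Sigma)} \ge \dim\Sigma - \dim\Theta$) combined with $\dim\Sigma \le \dim\Theta - K + 2M$ shows the generic fiber of $\pi$ has dimension at most $2M - K < M$; projecting to $x$ can only decrease dimension, so the bad set inside $\theta\cdot\M$ has dimension $< M = \dim(\theta\cdot\M)$, i.e., the generic $z\in\theta\cdot\M$ is good.

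The main obstacle, and the step requiring the most care, is making the generic-fiber-dimension statements rigorous in the semi-algebraic category: over $\RR$ one cannot simply invoke the theorem on dimension of fibers from algebraic geometry, and one must instead use the semi-algebraic analogue (e.g.\ via cell decomposition / the Tarski–Seidenberg theorem: a semi-algebraic map can be stratified so that fiber dimension is constant on each stratum, and $\dim(\text{total}) = \dim(\text{image}) + \max\text{ fiber dim}$ over a suitable stratum, with the caveat that one must take the \emph{generic} image stratum). One must also be careful that ``generic $\theta$'' means outside a semi-algebraic set of dimension $<\dim\Theta$, matching the definition of $\Theta$-generic given earlier, and that the exceptional locus where the fiber-dimension bound fails is itself semi-algebraic of smaller dimension — this is where the decomposition of $\Sigma$ into finitely many manifolds, each handled separately, does the work. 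Everything else is routine dimension arithmetic; I would organize the write-up as: (i) define $\Sigma$ and check it is semi-algebraic; (ii) bound $\dim\Sigma$ using the hypothesis; (iii) state the semi-algebraic fiber-dimension lemma; (iv) deduce (1) and (2) by projecting to $\Theta$ and to the $x$-coordinate respectively.
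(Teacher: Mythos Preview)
Your plan is essentially the paper's own proof: the same incidence set (the paper calls it $\B$), the same bound $\dim\B \le 2M + \dim\Theta - K$ via projection to $\M\times\M$, the same argument for part~(2), and for part~(1) the same case split on whether the image of $\B$ in $\Theta$ is full-dimensional, followed by bounding the generic fiber and projecting to the $x$-coordinate. Two small points: the displayed inequality ``generic fiber dimension $=\dim\Sigma-\dim\overline{\pi(\Sigma)}\ge\dim\Sigma-\dim\Theta$'' goes the wrong way for what you need---you want the \emph{upper} bound, which is exactly $\dim\Sigma-\dim\Theta$ only in the case $\dim\pi(\Sigma)=\dim\Theta$ (the other case being trivial, as you note); and your aside about $\Sigma_{\mathrm{diag}}$ having dimension $\le\dim\Theta-K+M$ is unjustified and indeed would prove too much---you were right to drop it. For the delicate generic-fiber step the paper decomposes $\B$ and $\Theta$ into Nash submanifolds and applies the semi-algebraic Sard theorem together with the preimage theorem on each piece, which is a concrete realization of the stratification approach you sketch.
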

\rev{\begin{remark} We note that Lemma~\ref{lem.AA} holds for any integer
$K$ that satisfies the inequalities 
$$ \dim(\Theta(x,y))\leq \dim(\Theta)-K, \quad  \forall x,y \in \M, \text{ with }x \not \sim y$$ and $K>M$ (resp. $K > 2M$). However, 
when we apply the lemma, we always take $K = \dim V - k(H)$, where $k(H)$ is the maximal dimension of an $H$-orbit, as
defined in~\eqref{eq:K}.
\end{remark}}
\begin{proof}		
	The proof is based on studying the following semi-algebraic incidence set:
	\begin{equation}
		\B=\B(\M,\Theta)=\{(x,y,\theta)\in \M \times \M \times \Theta\,|\,  x\not \sim y \text{ \rev{and} } \theta\cdot x \in H(\theta\cdot y)  \}.
	\end{equation}
	Let $\pi:\M \times \M \times \Theta\to \M \times \M$ be the projection
	$$\pi(x,y,\theta)=(x,y) .$$
	According to ~\cite[Lemma 1.8]{dym2022low}, we can bound the dimension of the incidence $\B$ by  
	\begin{equation}\label{eq:bound}
		\begin{split}
			\dim(\B)&\leq\dim(\pi(\B))+\max_{(x,y)\in \M \times \M, x\not \sim y} \dim \left(\pi^{-1}(x,y) \right)\\ &\leq 2\dim(\M)+ \max_{(x,y)\in \M \times \M, x\not \sim y} \dim \Theta(x,y)\\
			&\leq 2M+ \dim(\Theta)-K.
		\end{split}	
	\end{equation}
		\rev{If $K>2M$, then~\eqref{eq:bound} implies that $\dim(\B)<\dim(\Theta)$. In particular, if
	$\phi$ is the projection $\phi(x,y,\theta)=\theta$, then $\dim(\phi(\B))<\dim(\Theta)$.
	It follows that a generic $\theta\in \Theta$ is not in $\phi(\B)$.  
    Suppose that $z,w\in \theta\cdot \M$ with
    		$z=\theta \cdot x$  and  $w =\theta \cdot y$ for some $x, y \in \M$. If 
$\theta \not \in \phi(B)$, then by definition of $\B$ we know that if $z = h \cdot w$ for some $h \in H$ then $x \sim y$.
Thus, $z = \theta \cdot x$ and $w = \theta \cdot y$ are also equivalent, which completes the proof of the second part of the lemma.}

	For the first part of the lemma, let us assume that $K>M$ so that $\dim(\B)<M+\dim(\Theta)$. We consider two cases: (i) $\dim(\phi(\B))<\dim(\Theta) $ 
 and (ii) $\dim(\phi(\B))=\dim(\Theta)$.
 
 In the first case we have,
	as before, that a generic $\theta \in \Theta$ is not in $\phi(B)$. This implies that for all such $\theta$ if $\theta \cdot x \in H(\theta \cdot y) $ for $x, y \in \M$ then $x \sim y$. Thus, $\theta \cdot x \sim \theta \cdot y$ as well. 
	
	In the second case, we assume that $\dim(\phi(\B))=\dim(\Theta)$
	and we will prove that for a generic $\theta \in \Theta$, the fiber
	$\phi^{-1}(\theta) \cap \B$ has dimension strictly less than $M$. Granted this fact, the proof of the first part proceeds as follows:
	Since $\dim (\phi^{-1}(\theta) \cap \B) < M$, the projection of the fiber onto the $x$ coordinate, which is the set 
	\begin{align}
		\N_\theta&=\{x\in \M\,|\, \exists y \in \M \text{ such that } (x,y,\theta)\in \B\}\\
		&=\{x\in \M\,|\, \exists y \in \M \text{ such that }x \not \sim y \text{ but } (\theta\cdot x) \in H(\theta \cdot y)\}, \nonumber
	\end{align}
	also has a dimension strictly less than $M$.
	Thus, generic $x$ will not be in $\N_\theta$, and for such $x$, if there exists some $y\in \M$ such that $\theta\cdot x \in H(\theta \cdot y) $, then $x \sim y $ and so $\theta \cdot x \sim \theta \cdot y$.
	
	We now prove that the generic fiber of $\phi \colon \B \to \Theta$ has dimension
	less than $M$. 
	By \cite[Proposition 2.9.10]{bochnak1998real}, the semi-algebraic set
	$\Theta$
	is a disjoint union of Nash submanifolds $\Theta_1,\ldots,\Theta_s$. Clearly,
	it suffices to prove that  $\dim (\phi^{-1}(\theta) \cap \B) < M$
	for a generic  $\theta$ in any Nash submanifold $\Theta_\ell$, where $\dim
	\Theta_{\ell} = \dim \Theta$.
	With this observation, we can reduce to the case that $\Theta$ is a
	manifold.
	
	Now we decompose $\B$ as a disjoint union of Nash submanifolds\footnote{\rev{A semi-algebraic subset of $A \subset \R^N$ is called a Nash submanifold of dimension $d$ if,  for every point $x \in A$, there is an open neighborhood $x \in U \subset \R^N$ and a $C^\infty$ differentiable
function $\varphi \colon U \to \R^d$ such that the restriction of $\varphi$ to $U \cap A$ is a diffeomorphism~\cite[Definition 2.9.9]{bochnak1998real}}.}
	$\bigcup_{\ell=1}^L U_\ell$ and 
	let $\L\subseteq \{1,\ldots,L\}$
	denote the subset of indices $\ell$ for which $\phi(U_\ell)$ attains its maximal value $\dim(\Theta)$.
	For every fixed $\ell\in \L$,
	the semi-algebraic version of Sard's
	Theorem~\cite[Theorem 9.6.2.]{bochnak1998real}
	implies that the generic  $\theta \in \phi(U_\ell)$
	is a regular value (i.e., not the image of a critical point) of the restriction of $\phi$ to $U_\ell$.
	By the
	pre-image theorem \cite[Theorem 9.9]{tu2011manifolds},
	every regular value $\theta$ is
	either not in the image of $\phi|_{U_\ell}$, or we have the equality
	\begin{equation}
		\dim(U_\ell)=\dim \phi(U_\ell)+\dim (\phi^{-1}(\theta)\cap U_\ell)=\dim(\Theta) +
		\dim (\phi^{-1}(\theta) \cap U_\ell).    
	\end{equation}
	So,
	\begin{align*}\dim \left( \phi^{-1}(\theta) \cap U_\ell\right)=\dim(U_\ell)-\dim(\Theta)
		\leq \dim(\B)-\dim(\Theta)<M.
	\end{align*}
	The semi-algebraic set, which is the union of
	the images of the critical points of $\{\phi_\ell\}_{\ell \in \L}$ and
	$ \cup_{\ell \not \in \L}\phi(U_\ell)$ has dimension
	strictly less than $\dim \Theta$. If $\theta \in \Theta$ lies in the complement of this set, then we have that the fiber
	\begin{equation}
		\phi^{-1}(\theta)\cap \B=\cup_{\ell\in \L}\left( \phi^{-1}(\theta)\cap U_\ell \right),   
	\end{equation}
	has a dimension strictly smaller than $M$.
\end{proof}

\subsection{Proof of Theorem \ref{thm:affine}}
\paragraph{The case $\Theta = \GLV$.} 
Fix some $x \not \sim y$ in $\M$. By Lemma \ref{lem.AA}, it suffices to show
that the dimension of the  set 
$$\GL(x,y)=\{A \in \GLV\,|\, Ax \in H(Ay) \},$$
is at most $\dim \GLV - K$. If $x$ and $y$ are parallel, then
$Ax$ and $Ay$ are parallel for all $A\in \GLV$. Since $H$ acts
by unitary transformations, $Ax$ is in the $H$-orbit of $Ay$ \rev{only if $|Ax| = |Ay|$ or equivalently only if $|x| = |y|$}. \rev{(Here $|\cdot|$ denotes the usual Euclidean norm.)}  Since
$x$ and $y$ are parallel, this means $x = \pm y$, so $\GL(x,y)$ is empty unless the vectors $x$ and $y$ are $\GLV$-equivalent.

Now, suppose that $x$ and $y$ are linearly independent. Then we can choose an ordered basis $b_1, b_2, \ldots , b_N$ with $b_1 = x$ and
$b_2 = y$. If we express an element $A \in \GLV$ as a matrix with respect to this ordered basis, then $Ax$ is the first \rev{column} of $A$ and
$Ay$ is the second \rev{column} of $A$. The condition that $Ax$ lies in the $H$-orbit of $Ay$ implies that the first \rev{column} of the matrix $A$ lies in
a real algebraic subset of $\R^N$ of dimension at most $k(H)$.
Hence, 
$$\dim\GL(x,y) \leq \dim \GLV - (\dim V - k(H)) = \dim \GLV - K.$$

\paragraph{The case $\Theta = \AffV$.} The argument is
similar to the case $\Theta = \GLV$ but requires more care
because we also consider translations. This time, fix $x \neq y$
in $\M$. It suffices to prove that the dimension of the semi-algebraic  
$$\Aff(x,y)=\{A \in \AffV| \, A(x) \in H(Ay) \},$$
is at most $\dim(\AffV)-K $.  Any affine transformation $A\in \AffV $ can be decomposed as
$$A(x)=L(x+T),$$
where $L:V\to V$ is linear and invertible, and $T\in V$. We  can write $\Aff(x,y)$ as a union of three sets $S_1,S_2,S_3$ defined as
\begin{align*} 
S_1&=\{(L,T)\in \Aff(x,y)| \quad x+T=-(y+T)\}\\ 
S_2&=\{(L,T)\in \Aff(x,y) \quad x+T\neq -(y+T) \text{ but they are linearly dependent}\}\\
S_3&=\{(L,T)\in \Aff(x,y)| \quad x+T \text{ and } y+T \text{ are linearly independent}\}.
\end{align*}
We will prove the bound on the dimension of each of these sets separately.

We start with $S_1$. If  $(L,T)$ is in $S_1$, then there is a unique $T \in V$ such that $x+T = -(y+T)$, namely $T=-\frac{1}{2}(x+y) $. Therefore, the dimension of $S_1$ is at most the dimension of $\GLV$, which gives us a bound on the dimension, 
$$\dim(S_1)=\dim(\GLV)=\dim(\AffV)-\dim(V)\leq \dim(\AffV)-K,$$
since $K=\dim(V)-k(H)\leq \dim(V) $.

Next, we prove that $S_2$ is empty. If $(L,T)$ is in $S_2$, then  $x+T$ and $y+T$ are linearly dependent, but $x+T \neq - ( y+T ) $ by the definition of $S_2$. Also $x+T\neq y+T$ since $x \neq y$. In particular, 
$x+T$ and $-(y+T)$ cannot both be zero. Assume without loss of generality that $x + T \neq 0 $, then there exists some $\lambda\in \RR$ such that $|\lambda|\neq 1$ and $y+T=\lambda(x+T) $.  
 If $L(x+T)=h(L(y+T))$ for some $h \in H$,  then in particular $L(x+T) $ and $L(y+T)=\lambda L(x+T) $ have the same norm because we assume that $H$ acts by orthogonal transformations. But this is impossible since we assume $|\lambda| \neq 1$.
 
 Finally, we consider $S_3$. Fix some  $T$  such that $x+T$ and $y+T$ are linearly independent. We can complete $x+T$ and $y+T$ to an ordered basis $b_1 ,b_2,\ldots, b_N$ of $V$, whose first two elements are $b_1=x+T$ and $b_2=y+T$. An element $L\in \GLV $ can then be parameterized by the values in $V$ it assigns to each of the basis elements. In particular, once a value $L(b_1)$ is specified, we will have that $(L,T)\in \Aff(x,y)$  if and only if $L(b_2) $ is in the $H$-orbit  of $L(b_1)$. Since the dimension of this orbit is $\leq k(H)$, we deduce that 
 $$\dim(S_3)\leq \dim \AffV-\left(\dim(V)-k(H)\right) =\dim(\AffV)-K,$$ 
 which is what we wanted to prove.

\subsection{Proof of Theorem \ref{thm:orthogonal}}

We start by noting that an element $A \in \O(N)$ can be viewed as an $N$-tuple of unit vectors $(w_1,\dots, w_N)$ characterized by the property that $w_k \in \Span(w_1,\dots, w_{k-1})^{\perp}$. Viewed this way, we realize $\O(N)$ as
the last step in a tower of sphere bundles
$$\O(N)=F_N \to F_{N-1} \to \dots \to F_1 = S^{N-1},$$
where $F_k$ consists of all unit vectors $(w_1,\dots, w_k)$ such that $w_k \in \Span(w_1,\dots, w_{k-1})^{\perp}$ and $(w_1,\ldots,w_{k-1})\in F_{k-1}$.

Once again, it is sufficient to bound the dimension of 
$\O(x,y) = \{A \in \OV\,|\, Ax \in H(Ay)\}$. 
If $x$ and $y$ 
are linearly dependent, then $\O(x,y)$ is empty
unless $x = \pm y$ and then $x \sim y$. 
On the other hand, if $x,y$ are linearly independent, then after normalizing, we can assume that $x$ is a unit vector. 
Choose an orthonormal basis $e_1, \ldots , e_N$ for $V$ such that
$e_1 = x$ and $y = ae_1 + be_2$ for some constants $a, b$ with $b \neq 0$  (because $y$ is not parallel to $x$). Now
suppose that $A \in \OV$ is an orthogonal matrix.
The condition that $Ax  = h Ay$ for some $h \in H$ is equivalent to the condition that 
$Ay= aAe_1 + bAe_2$ is in the $H$-orbit of $Ae_1$. 
The vectors $Ae_1$ and $Ae_2$ are just the first two columns of the matrix
$A$. Call these vectors $w_1$ and $w_2$ respectively. 

For a given choice of first column $w_1$, there is an 
$(n-2)$-dimensional set of possible vectors~$w_2$, which can be the second
column of the orthogonal matrix $A$. The condition that
$Ay \in H Ax$
means that we must choose $w_2$ to be in the real 
algebraic subset $-ab^{-1} w_1 +  b^{-1}Hw_1 \subset \R^N$, which has dimension at most $k(H)$. 
Thus, $w_2$ is taken from a 
subset of $S^{n-2}$ of codimension at least~$n-2 - k(H)$, which is equal to $K-2$. On the other hand, once
we have selected the columns $w_1, w_2$, we impose no additional conditions
on the subsequent columns of $A$. Hence, 
$\dim A(x,y) \leq \dim \O(V) -K + 2$.

Now, suppose that the orbits of $H$ are connected. \rev{First,
note that if $\dim Hw_1 = 0$ then $Ay = Ax$ so
$x$ and $y$ are linear dependent and $O(x,y)$ is empty.}
Hence, we may suppose that $Hw_1$ is a connected real algebraic subset
of $\R^N$
of positive dimension. Because orbits are also \rev{smooth} submanifolds,
the connectedness of $Hw_1$ implies that it is also irreducible as an algebraic set. Since $w_2 = -ab^{-1}w_1 + b^{-1}hw_1$ is a unit vector, we impose the additional condition 
that 
$$ 1= \langle w_2, w_2 \rangle = a^2b^{-2} + b^{-2}  -2ab^{-2} \langle w_1, hw_1\rangle.$$
Since $Hw_1$ is positive-dimensional, it contains at least one 
vector $hw_1 \neq \pm w_1$, so the algebraic function
$$Hw_1  \ni hw_1 \mapsto  a^2b^{-2} + b^{-2}  -2ab^{-2} \langle w_1, hw_1\rangle,$$
is not constant on the irreducible algebraic set $Hw_1$ unless $a=0$
and $b= \pm 1$. 
Thus, if $x,y$ are not orthogonal
vectors with the same magnitude, then the condition that $w_2$ is a unit
vector shows that $w_2$ is taken from an algebraic subset of
$S^{n-2}$ of dimension at most $k(H) -1$, so in this 
case $\dim \O(x,y) \leq \dim \O(V) - K+1$.

On the other hand, if $x,y$ are orthogonal vectors of the same magnitude, then after normalizing we must have that 
$w_1 = x$ and $w_2 = \pm y$. The requirement 
that $\langle w_1, w_2 \rangle = 0$ imposes the condition that 
$\langle w_1, h w_1 \rangle=0$. This condition cannot be identically satisfied on
the orbit $Hw_1$,
since it is not satisfied at $w_1 \in Hw_1$. 
Thus, once again we see that $\dim \O(x,y) \leq \dim \O(V) - K +1$.

\section*{Acknowledgments}
T.B. and D.E. were supported by the BSF grant no. 2020159. T.B. is also supported in part by the NSF-BSF grant no. 2019752, and the ISF grant no. 1924/21. D.E. was also supported by NSF-DMS 2205626. N.D. was supported by ISF grant 272/23.

\bibliographystyle{plain}


\appendix 

\section{A transversality theorem for complex representations} \label{sec:complex}
Let $V$ be a complex unitary representation of a compact group $H$.
We now state and prove analogs of Theorem~\ref{thm:affine} and Theorem~\ref{thm:orthogonal} for real semi-algebraic sets, which are generic with respect to the action of the complex algebraic groups $\GLV, \AffV$
and the unitary group $\U(V)$. Note that the term generic
refers to the real Zariski topology. The proofs are similar to the real case, so we only sketch them, indicating the necessary modifications.
Set $$ K= \dim_\R V - k(H)$$
where $k(H) = \max_{x \in V} \dim_\R Hx$.

\begin{theorem}\label{thm:unitary-affine}
Let $V$ be a complex unitary representation of a compact Lie group $H$.
\begin{enumerate}
	\item If $K>M$, then for a generic $x$ in a $\GLV$-generic semi-algebraic subset $\M \subset V$ of dimension $M$ if 
 $h \cdot x  \in \M$ then $h \cdot x = \lambda x$ for some
 $\lambda \in S^1$.
\item If $K > M+1$ then for a generic $x$ in an $\AffV$-generic semi-algebraic subset $\M \subset V$ of dimension $M$ if 
 $h \cdot x  \in \M$ then $h \cdot x = x$.
 \item If $K>2M$, then for all $x$ in a $\GLV$-generic semi-algebraic subset $\M \subset V$ of dimension $M$, 
 if $h \cdot x \in \M$, then $h\cdot x = \lambda x$ for 
 some $\lambda \in S^1$.
\item  If $K>2M+1$, then for all $x$ in an $\AffV$-generic semi-algebraic subset $\M \subset V$ of dimension $M$, 
 if $h \cdot x \in \M$, then $h\cdot x = x$. 
\end{enumerate}
\end{theorem}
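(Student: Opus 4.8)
The plan is to follow the proofs of Theorem~\ref{thm:affine} almost verbatim, feeding into the fiber lemma (Lemma~\ref{lem.AA}) the same kind of dimension estimates on the sets $\{\theta : \theta\cdot x\in H(\theta\cdot y)\}$, but keeping track of the extra circle of scalars that a complex representation carries. The point is that over $\R$ two parallel vectors of equal norm differ by a sign, whereas over $\C$ they differ by an element of $S^1$; this is the ambiguity appearing in the $\GLV$ statement, and it is the source of the extra ``$1$'' in the affine thresholds. Throughout, $\GLV$ and $\AffV$ are regarded as real semi-algebraic groups acting semi-algebraically on $V$ (viewed as $\R^{2\dim_\C V}$), and the orbits $Hx$ are real algebraic of real dimension at most $k(H)$, exactly as in the real case.

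\textbf{The case $\Theta=\GLV$.} Here I would run the argument of Lemma~\ref{lem.AA} with the modified incidence set
\[
\B=\{(x,y,\theta)\in\M\times\M\times\GLV : y\notin S^1\cdot x,\ \theta\cdot x\in H(\theta\cdot y)\},
\]
which is still semi-algebraic and whose defining condition ``$y\notin S^1\cdot x$'' is preserved by $\GLV$ because $\GLV$ acts $\C$-linearly. The estimate needed is $\dim_\R\GL(x,y)\le\dim_\R\GLV-K$ for every $x,y\in V$ with $y\notin S^1\cdot x$, where $\GL(x,y)=\{A\in\GLV : Ax\in H(Ay)\}$: if $x,y$ are $\C$-linearly dependent then $Ay$ is a scalar multiple of $Ax$, and since $H$ preserves norms while $y\notin S^1\cdot x$, the set $\GL(x,y)$ is empty; if $x,y$ are $\C$-linearly independent, complete them to a $\C$-basis $b_1=x,b_2=y,b_3,\dots$, identify $A$ with the tuple $(Ab_1,Ab_2,\dots)$, and note that the condition $Ax\in H(Ay)$ confines $Ax$ to the orbit $H(Ay)$ of real dimension $\le k(H)$, costing $\dim_\R V-k(H)=K$ real dimensions. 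Then $\dim_\R\B\le 2M+\dim_\R\GLV-K$, and the Sard/pre-image argument inside Lemma~\ref{lem.AA} yields: for a generic $\theta$, if $K>M$ then a generic $z\in\theta\cdot\M$ (and if $K>2M$, every $z\in\theta\cdot\M$) has the property that any $w\in\theta\cdot\M$ with $z\in Hw$ satisfies $w\in S^1\cdot z$. Applying this with $z=x$ and $w=h\cdot x$ gives $h\cdot x=\lambda x$ for some $\lambda\in S^1$, which is parts (1) and (3).

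\textbf{The case $\Theta=\AffV$.} For $\AffV$ one checks directly from the definition (take the linear part to be $\Id$ and the translation to be $-x$) that $x\sim y$ iff $x=y$, so the incidence set of Lemma~\ref{lem.AA} here is $\{(x,y,\theta)\in\M\times\M\times\AffV : x\neq y,\ \theta\cdot x\in H(\theta\cdot y)\}$, and it suffices to prove $\dim_\R\Aff(x,y)\le\dim_\R\AffV-(K-1)$ for all $x\neq y$, where $\Aff(x,y)=\{A\in\AffV : Ax\in H(Ay)\}$. Writing $A(v)=L(v+T)$ and splitting according to $T$: where $x+T$ and $y+T$ are $\C$-linearly independent, the same basis argument costs $K$ real dimensions, and over the full $(\dim_\R V)$-dimensional space of such $T$ this contributes $\dim_\R\AffV-K$; where $x+T=\nu(y+T)$, the norm constraint forces $|\nu|=1$, so $\nu$, and hence $T$, ranges over a one-real-dimensional set — a circle, rather than the single point $\nu=-1$ of the real case — and for each such $T$ the set of admissible $L$ has real dimension at most $\dim_\R\GLV$, contributing $\le 1+\dim_\R\GLV=\dim_\R\AffV-\dim_\R V+1\le\dim_\R\AffV-(K-1)$. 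Feeding $\dim_\R\Aff(x,y)\le\dim_\R\AffV-(K-1)$ into Lemma~\ref{lem.AA} with the integer $K-1$ in place of $K$ (legitimate by the Remark following that lemma): if $K-1>M$ (resp.\ $K-1>2M$), then for a generic (resp.\ every) $x$ in the translate, $h\cdot x\in\M$ forces $x\sim h\cdot x$, i.e.\ $h\cdot x=x$. This is parts (2) and (4).

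\textbf{The main difficulty.} The only genuine subtlety is the bookkeeping for the scalar group $S^1$. Since it is positive-dimensional (unlike $\{\pm1\}$), it cannot simply be folded into the equivalence relation $\sim$ in the affine setting: affine maps are not $\C$-linear, so ``$y\in S^1\cdot x$'' is not $\AffV$-invariant, and one must instead isolate the $\C$-parallel locus by hand and pay the single extra dimension that comes from imposing $|\nu|=1$. This is exactly why the affine thresholds read $K>M+1$ and $K>2M+1$ rather than $K>M$ and $K>2M$. Everything else — the basis computation bounding $\dim_\R\GL(x,y)$, the Sard-type fiber argument inside Lemma~\ref{lem.AA}, and the verification that $\GLV$, $\AffV$ and their actions are semi-algebraic over $\R$ — carries over from the real proof with only cosmetic changes.
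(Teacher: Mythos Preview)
Your proposal is correct and follows essentially the same approach as the paper's sketch: reduce to Lemma~\ref{lem.AA}, bound $\dim_\R\Theta(x,y)$ via the basis argument when $x+T,y+T$ are $\C$-independent, and in the affine case isolate the one-real-parameter family $S_1=\{(L,T):x+T=\lambda(y+T),\ |\lambda|=1\}$ responsible for the extra ``$+1$'' in the thresholds. Your handling of the $\GLV$ case---replacing the $\sim$ relation by the explicitly $\GLV$-invariant condition $y\notin S^1\cdot x$ in the incidence set---is in fact slightly cleaner than the paper's phrasing, and your two-case split for $\AffV$ just absorbs the paper's (empty) set $S_2$ into the dependent case via the norm argument.
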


\begin{theorem}\label{thm:unitary-unitary}
Let $V$ be a unitary representation of
a compact Lie group $H$.
\begin{enumerate}
	\item If $K>M+3$, then for a generic $x$ in a $\U(V)$-generic
 semi-algebraic subset $\M \subset V$ of dimension $M$,
 if $h \cdot x \in \M$, then $h\cdot x =\lambda x$ for some $\lambda \in S^1$.
	\item  If $K>2M+3$, then for all $x$ in a $\U(V)$-generic semi-algebraic subset $\M \subset V$ of dimension $M$, 
 if $h \cdot x \in \M$, then $h\cdot x =\lambda x$ for some $\lambda \in S^1$.
\end{enumerate}

If in addition the orbits of $H$ are connected (for example if $H$ is connected), then we obtain the improved bounds
$K > M+2$ and $K > 2M+2$ for the generic unitary transformation of~$\mathcal{M}.$ 
\end{theorem}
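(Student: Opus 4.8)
The plan is to derive Theorem~\ref{thm:unitary-unitary} from the fiber lemma (Lemma~\ref{lem.AA}) applied to $\Theta=\U(V)$, viewed as a real-algebraic (hence semi-algebraic) group of $\R$-linear automorphisms of the real vector space underlying $V$, in direct analogy with the proof of Theorem~\ref{thm:orthogonal}. For $x,y\in V$ put $\U(x,y)=\{A\in\U(V)\mid Ax\in H(Ay)\}$. By Lemma~\ref{lem.AA} and the remark following it, which permits any constant in the role of $K$, it is enough to show
\[
\dim_\R\U(x,y)\ \le\ \dim_\R\U(V)-(K-3)\qquad\text{for all }x\not\sim y,
\]
together with the sharper bound with $K-2$ in place of $K-3$ when the orbits of $H$ are connected; substituting the effective constants $K-3$ and $K-2$ into parts~(1) and~(2) of the lemma produces exactly the four inequalities $K>M+3$, $K>2M+3$, $K>M+2$, $K>2M+2$. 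The conclusion ``$h\cdot x=\lambda x$ for some $\lambda\in S^1$'' is to be read off the relation $z\sim w$ exactly as ``$h\cdot x=\pm x$'' is in the orthogonal case: if $z=h\cdot w$ with $z,w$ in the generic translate but $z$ is not a unit scalar multiple of $w$, then $|z|=|w|$ forces $z,w$ to be $\CC$-linearly independent, and the dimension estimate gives $\U(x,y)\subsetneq\U(V)$ for their $\theta$-preimages, i.e.\ $z\not\sim w$, contradicting the lemma.

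I would first dispose of the degenerate case, exactly as in the orthogonal proof: if $x,y$ are $\CC$-linearly dependent then $Ax,Ay$ are $\CC$-parallel for every $A\in\U(V)$, and since $H$ acts isometrically, $Ax\in H(Ay)$ forces $|x|=|y|$; hence $\U(x,y)=\varnothing$ unless $y=\lambda x$ with $\lambda\in S^1$, and in that event $h\cdot x=\lambda x$ already has the asserted form. So assume $x,y$ are $\CC$-linearly independent. I rescale so that $|x|=1$, choose a Hermitian-orthonormal basis $e_1,\dots,e_N$ with $e_1=x$ and $y=ae_1+be_2$ ($a\in\CC$, $b>0$), and identify $A\in\U(V)$ with its columns $w_k=Ae_k$. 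The unitary group realizes as a tower of sphere bundles
\[
\U(N)=F_N\to F_{N-1}\to\dots\to F_1=S^{2N-1},
\]
where $F_k$ is the set of Hermitian-orthonormal $k$-tuples $(w_1,\dots,w_k)$ and the fibre of $F_k\to F_{k-1}$ is the unit sphere $S^{2(N-k)+1}$ of $\Span_\CC(w_1,\dots,w_{k-1})^{\perp}$. Here $Ax\in H(Ay)$ amounts to $aw_1+bw_2\in Hw_1$, i.e.\ $w_2\in\tfrac1b(Hw_1-aw_1)$, a real-algebraic subset of $\CC^N$ of real dimension $\le k(H)$ that constrains neither $w_1$ nor $w_3,\dots,w_N$. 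Counting dimensions along the tower,
\[
\dim_\R\U(x,y)\ \le\ (2N-1)+k(H)+(N-2)^2\ =\ \dim_\R\U(V)-(K-3),
\]
using $\dim_\R\U(V)=N^2$ and $K=2N-k(H)$. This is the general bound.

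For the improvement when the orbits are connected---so each $Hw_1$ is irreducible as a real-algebraic variety---I would extract one more unit of codimension from the fact that $w_2$, being a column of a unitary matrix, satisfies $\langle w_2,w_1\rangle=0$ and $|w_2|=1$. With $w_2=\tfrac1b(hw_1-aw_1)$, the orthogonality $\langle w_2,w_1\rangle=0$ says $\langle hw_1,w_1\rangle=a$; on the locus $\{v\in Hw_1\mid\langle v,w_1\rangle=a\}$ one has $|v-aw_1|^2=1-|a|^2$ identically, so the unit-norm requirement forces $b^2=1-|a|^2$ (equivalently $|x|=|y|$), and if that fails then $\U(x,y)=\varnothing$. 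When it holds, the admissible second columns over a given $w_1$ correspond to $\{v\in Hw_1\mid\langle v,w_1\rangle=a\}$, which is a \emph{proper} subvariety of $Hw_1$ once $\dim Hw_1\ge 1$: the equation fails at $v=w_1$ unless $a=1$, in which case its identical vanishing would give $\langle v,w_1\rangle\equiv 1$ and hence $Hw_1=\{w_1\}$ by the equality case of Cauchy--Schwarz. (If $\dim Hw_1=0$ then $Hw_1=\{w_1\}$ and the admissible $w_2$ would be $0$, so there are none.) Thus the admissible $w_2$ have dimension $\le k(H)-1$, and $\dim_\R\U(x,y)\le(2N-1)+(k(H)-1)+(N-2)^2=\dim_\R\U(V)-(K-2)$.

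I expect the connected-orbit refinement to be the only genuinely delicate point: one must verify that $\langle v,w_1\rangle=a$ cuts out a proper subvariety of the complex orbit $Hw_1$ and dispose cleanly of the low-dimensional-orbit degeneracies, and it is here---with the Hermitian inner product and the doubling of real dimensions---that the argument departs most from Theorem~\ref{thm:orthogonal}. Everything else is a transcription of the proof of Theorem~\ref{thm:orthogonal}; the single numerical difference is that the sphere fibre over the first column is $S^{2N-3}$ rather than $S^{N-2}$, which is exactly what converts the ``$+2$'' and ``$+1$'' of Theorem~\ref{thm:orthogonal} into the ``$+3$'' and ``$+2$'' here.
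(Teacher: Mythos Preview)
Your proposal is correct and follows essentially the same approach as the paper: reduce to bounding $\dim_\R \U(x,y)$ via the fiber lemma, use the tower-of-sphere-bundles description of $\U(N)$, and constrain the second column $w_2$ to lie in (an affine image of) $Hw_1$. The only noteworthy difference is in the connected-orbit refinement: the paper first imposes the unit-norm condition on $w_2$ and handles the exceptional case $a=0,\,|b|=1$ separately via orthogonality, whereas you impose orthogonality $\langle w_2,w_1\rangle=0$ first and then observe that the norm condition becomes automatic (forcing $|x|=|y|$) on that locus. Your organization is slightly cleaner since it avoids the case split, but the two arguments are interchangeable and yield the same bound.
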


\begin{proof}[Sketch of the proof of Theorem~\ref{thm:unitary-affine} and
Theorem~\ref{thm:unitary-unitary}]
We note that Lemma~\ref{lem.AA} applies in the complex case if we identify the complex vector space $V$ with $\R^{2 \dim V}$. So we are again reduced to bounding 
$\dim \Theta(x,y) = \{ \theta \in \Theta\,|\, \theta\cdot x \in H(\theta \cdot y)\}$ over all non-equivalent pairs $x,y$,  where $\Theta =\Aff(V)$ or $\Theta = U(V)$.

When $\Theta = \GLV$, then two vectors $x, y$ are equivalent if and only if
$y = e^{\iota \theta}x$ 
and as in the
proof of Theorem~\ref{thm:affine}, $\Theta(x,y)$ is empty unless they have the same magnitude, in which case they are equivalent. Likewise,
if $x,y$ are linearly independent, then we can take them to be the first
two vectors in an ordered basis for $V$. If we represent
$A \in \GLV$ as a matrix with respect to this ordered basis, the condition that $Ax = h\cdot Ay$ implies that the first column of $A$ lies in a real algebraic set of dimension at most $k(H)$.

In the case where $\Theta = \Aff(V)$, vectors are equivalent if and only if they are equal. As in the real case, if $x \neq y$ we can decompose
$\Aff(x,y)$ into three subsets
\begin{align*} 
S_1&=\{(L,T)\in \Aff(x,y)| \quad x+T=\lambda (y+T), |\lambda| = 1\}\\ 
S_2&=\{(L,T)\in \Aff(x,y)| \quad x+T, (y+T) \text{ are linearly dependent but }|x+T| \neq |y +T|\}\\
S_3&=\{(L,T)\in \Aff(x,y)| \quad x+T \text{ and } y+T \text{ are linearly independent}\}.
\end{align*}
For each $\lambda \in S^1$ there is a unique
translation $T$ such that $(x+T) = \lambda (y+T)$.
Thus, we have that $\dim_\R S_1  = \dim_\R\Aff(V) - \dim_\R(V)+1 \leq \dim_\R\Aff(V) - K+1$.
As in the real case $S_2$ is empty because the action of $H$ is norm preserving.
Likewise, the same argument used in the real case shows that 
$$\dim S_3 \leq \dim_\R \Aff(V) - (\dim_\R(V) - k(H)) = \dim_\R \Aff(V) - K,$$ as in the orthogonal case.

The proof of Theorem~\ref{thm:unitary-unitary} is again very similar to the proof of Theorem~\ref{thm:orthogonal}. The unitary $U(N)$ group
is also a tower of sphere bundles 
$$U(N) = F_N \to F_{N-1} \to \ldots \to F_1 = S^{2N-1},$$
where $F_k$ consists of all unit vectors $(w_1, \ldots , w_k)$, such
that $w_k \in \Span (w_1, \ldots , w_{k-1})^\perp$ and 
$(w_1, \ldots , w_{k-1}) \in F_{k-1}$, and two vectors are equivalent
if $x = \lambda y$ for some $\lambda \in S^1$.

Once again
$U(x,y)$ is empty if $x,y$ are linearly dependent but not equivalent (i.e., they have different norms). If $x,y$ are linearly independent, then using
the notation from the proof of Theorem~\ref{thm:orthogonal}, if
$A \in U(x,y)$ then the second column $w_2$ of $A$ lies in the real algebraic subset $-ab^{-1} w_1 + b^{-1}Hw_1$ for some constants
$a, b$ with $a \neq 0$ and here $w_1$ is the first column. This means that given $w_1$,  $w_2$ is taken from a real algebraic subset of $U(N)$ of dimension $k(H)$, which
has real codimension at least $2N-3 - k(H) = K-3$ in the set of unit vectors $w_2$ orthogonal to $w_1$. Once again we do not impose any further conditions on the columns $A$.

If $H$ is connected, then the requirement that
$$ 1= \langle w_2, w_2 \rangle = |a|^2|b|^{-2} + |b|^{-2}  -2\real( a|b|^{-2} \langle w_1, hw_1)\rangle,$$
imposes at least one additional real algebraic condition on $hw_1$ 
unless $a=0$ and $|b|=1$. Thus, if $x,y$ are not orthogonal
vectors with the same magnitude, then the condition that $w_2$ is a unit
vector shows that $w_2$ is taken from a real algebraic subset of
$S^{2N-3}$ of dimension at most $k(H) -1$, so in this 
case $\dim U(x,y) \leq \dim U(V) - K+2$.

On the other hand, if $x,y$ are orthogonal vectors of the same magnitude, then
after normalizing we conclude that 
$w_1 = x$ and $w_2 = y$ with
$\langle w_1, w_2 \rangle  =0$. If $w_2=y$ is in the $H$-orbit
of $w_1=x$, then writing $w_2 = hw_2$ for some $h \in H$
we see that 
$\langle w_1, h w_1 \rangle=0$. Clearly, this equation cannot
be identically satisfied on
the orbit $Hw_1$, since it does not hold at $w_1 \in Hw_1$, so we impose
an additional condition on the vector $w_2$ since it must also be taken
from the perpendicular space $w_1^\perp$.
Thus, once again we see that $\dim U(x,y) \leq \dim U(V) - K +2$.
\end{proof}

Let $V = \oplus_{\ell =1}^L V_\ell^{\oplus R_\ell}$ be a complex unitary representation of a compact Lie group $G$ where $\dim V_\ell = N_\ell$. 
Consider the problem of signal recovery from the second moment~\eqref{eq:signal_recovery}.
The complex analogue of Corollary \ref{cor:signal_recovery}
is the following. Note that the ambiguity group $H=\prod_{\ell=1}^L U(N_\ell)$ is connected because unitary groups are always connected.

\begin{corollary}\label{cor:signal_recovery_unitary}
Let 
$V = \oplus_{\ell =1}^L V_\ell^{\oplus R_\ell}$ be a complex unitary representation of a compact Lie group $G$.
Let $k = \sum_{\ell =1}^L \dim \U(N_\ell) - \dim \U(N_\ell -R_\ell)$
and set $K = \sum_{\ell =1}^L 2R_\ell N_\ell - k = \dim_\R V -k$.

Then, 
\begin{enumerate}
	\item If $K>M+1$ (resp. $K > M$), then a generic vector $x$ in an
 $\AffV$-generic (resp. $\GLV$-generic) semi-algebraic subset of
 dimension $M$ is determined (resp. up to a scalar multiplication
 by $\lambda \in S^1$) from its second moment.
 	\item  If $K>2M+1$ (resp. $K> 2M$)  then every vector $x$ in an
  $\AffV$-generic (resp. $\GLV$-generic) semi-algebraic subset
  of dimension $M$ is determined (resp. up to a scalar mulitplication by $\lambda \in S^1$) from its second moment.
	\item If  $K>M+2$, then a generic vector $x$ in a $\U(V)$-generic semi-algebraic set of dimension $M$ is determined, up to a scalar multiplication by $\lambda \in S^1$, from its second moment.
 	\item  If $K>2M+2$, then every vector $x$ in a $\U(V)$-generic semi-algebraic set is determined, up to a scalar multiplication by $\lambda \in S^1$, from its second moment.
\end{enumerate}
\end{corollary}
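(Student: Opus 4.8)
The plan is to deduce Corollary~\ref{cor:signal_recovery_unitary} from Theorem~\ref{thm:unitary-affine} and Theorem~\ref{thm:unitary-unitary} in exactly the way Corollary~\ref{cor:signal_recovery} was deduced from Theorem~\ref{thm:affine} and Theorem~\ref{thm:orthogonal}. First I would recall the complex analogue of the moment computation of~\cite{bendory2022sample}: writing an element of $V$ as a tuple $(X_1, \ldots , X_L)$ of $N_\ell \times R_\ell$ complex matrices as in~\eqref{eq.decomp}, the second moment $\E_{g \sim \mathrm{Unif}(G)}[yy^*]$ of $y = g\cdot x$ determines the tuple of Hermitian Gram matrices $(X_1^* X_1, \ldots , X_L^* X_L)$, and hence determines $x$ exactly up to the action of $H = \prod_{\ell = 1}^L \U(N_\ell)$, where the $\ell$-th factor acts by left multiplication on $X_\ell$, equivalently diagonally on the $R_\ell$ copies of $V_\ell$. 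Consequently $x$ is recovered from its second moment precisely when the $H$-orbit of $x$ meets $\M$ only at $x$, which is exactly the transversality conclusion supplied by the two appendix theorems.

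Next I would identify the quantity $K$. The stabilizer in $\U(N_\ell)$ of a matrix $X_\ell$ is the subgroup fixing its column span pointwise; for generic $X_\ell$ this span has dimension $\min(N_\ell, R_\ell)$, so the generic stabilizer is a copy of $\U(N_\ell - R_\ell)$ acting on the orthogonal complement when $R_\ell \le N_\ell$, and is trivial when $R_\ell \ge N_\ell$. Hence the generic $H$-orbit has real dimension $k = \sum_{\ell=1}^L \big( \dim \U(N_\ell) - \dim \U(N_\ell - R_\ell) \big)$ with the convention $\dim \U(m) = 0$ for $m \le 0$, so that $k = k(H)$ is the maximal $H$-orbit dimension and $K = \dim_\R V - k = \sum_\ell 2 R_\ell N_\ell - k$, matching the statement. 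I would also record the one structural fact that makes the bounds come out as stated: $H = \prod_\ell \U(N_\ell)$ is connected, since every unitary group is connected, and therefore every $H$-orbit in $V$ is connected.

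Then the corollary follows by direct substitution. For the $\GLV$ and $\AffV$ parts I would apply Theorem~\ref{thm:unitary-affine} with this $K$ and $M = \dim \M$: parts (1) and (3) give the $\GLV$ statements under $K > M$ and $K > 2M$ (recovery up to a scalar in $S^1$), while parts (2) and (4) give the $\AffV$ statements under $K > M+1$ and $K > 2M+1$ (exact recovery). For the $\U(V)$ parts I would apply Theorem~\ref{thm:unitary-unitary}; since the orbits of $H$ are connected by the previous paragraph, the improved bounds $K > M+2$ and $K > 2M+2$ apply, yielding recovery up to a scalar in $S^1$ for a generic $x$ in a $\U(V)$-generic semi-algebraic set of dimension $M$, respectively for every such $x$.

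The only step that is not a formal substitution is the first one---checking that over $\CC$ the second-moment map has ambiguity group exactly $\prod_\ell \U(N_\ell)$ and that the generic orbit has the claimed dimension. This is the complex transcription of the argument in~\cite{bendory2022sample}, with transposes replaced by conjugate transposes and orthogonal groups by unitary groups; I expect no essential difficulty, but it is where one must be careful about conventions and about the real-versus-complex dimension bookkeeping (in particular the factor $2R_\ell N_\ell$ in $\dim_\R V$).
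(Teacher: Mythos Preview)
Your proposal is correct and matches the paper's approach: the paper does not give a separate proof of this corollary, but states it as an immediate consequence of Theorems~\ref{thm:unitary-affine} and~\ref{thm:unitary-unitary} after noting (exactly as you do) that the ambiguity group $H=\prod_{\ell=1}^L \U(N_\ell)$ is connected because unitary groups are connected. Your identification of $k$ as the generic $H$-orbit dimension and the reduction to the complex analogue of the second-moment computation in~\cite{bendory2022sample} are precisely what the paper intends.
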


\end{document}